\newcommand{\da}{{\downarrow}}
\newcommand{\Var}{\mathit{Var}}
\newcommand{\fn}{\mathit{fn}}
\newcommand{\unblind}{\mathit{unblind}}
\newcommand{\blind}{\mathit{blind}}
\newcommand{\sign}{\mathit{sign}}
\newcommand{\checksign}{\mathit{checksign}}
\newcommand{\dec}{\mathit{dec}}
\newcommand{\enc}{\mathit{enc}}
\newcommand{\mal}{\mathit{mal}}
\newcommand{\plus}{\mathit{plus}}
\newcommand{\pred}{\mathit{pred}}
\newcommand{\prefix}{\mathit{prefix}}
\newcommand{\fst}{\mathit{fst}}
\newcommand{\snd}{\mathit{snd}}
\newcommand{\thd}{\mathit{thd}}
\newcommand{\open}{\mathit{open}}
\newcommand{\td}{\mathit{td}}
\newcommand{\adec}{\mathit{adec}}
\newcommand{\aenc}{\mathit{aenc}}
\newcommand{\pk}{\mathit{pk}}
\newcommand{\chck}{\mathit{check}}
\newcommand{\msg}{\mathit{msg}}
\newcommand{\ok}{\mathit{ok}}
\keywords{Term Rewriting, Security Protocols, Deduction, Static Equivalence}
\begin{document}
	

\title[Extending the Notion of Subterm Convergent]{Knowledge Problems in Protocol Analysis:\\ Extending the Notion of Subterm Convergent}

	\thanks{We would like to thank the anonymous reviewers of this article for their helpful comments, feedback, and suggestions. Their input has improved the paper. We also thank Paliath Narendran for his helpful input.}	
	
	\author[C.~Bunch]{Carter Bunch}[a]
	\author[S.~Dwyer Satterfield]{Saraid Dwyer Satterfield}[a]
	\author[S.~Erbatur]{Serdar Erbatur\lmcsorcid{0000-0002-7574-195X}}[b]
	\author[A.M.~Marshall]{Andrew M. Marshall\lmcsorcid{0000-0002-0522-8384}}[a]
	\author[C.~Ringeissen]{Christophe Ringeissen\lmcsorcid{0000-0002-5937-6059}}[c]
	
	\address{University of Mary Washington, Fredericksburg, VA, USA}	
	
	\address{University of Texas at Dallas, Richardson, TX, USA}	
	
	\address{Universit\'e de Lorraine, CNRS, Inria, LORIA, F-54000 Nancy, France}	

	\begin{abstract}
		We introduce a new form of restricted term rewrite system, the graph-embedded term rewrite system. These systems, and thus the name, are inspired by the graph minor relation and are more flexible extensions of the well-known homeomorphic-embedded property of term rewrite systems. As a motivating application area, we consider the symbolic analysis of security protocols, and more precisely the two knowledge problems defined by the deduction problem and the static equivalence problem. In this field, restricted term rewrite systems, such as subterm convergent ones, have proven useful since the knowledge problems are decidable for such systems. Many of the same decision procedures still work for examples of systems which are ``beyond subterm convergent''. However, the applicability of the corresponding decision procedures to these examples must often be  proven on an individual basis. This is due to the problem that they do not fit into an existing syntactic definition for which the procedures are known to work. Here we show that many of these systems belong to a particular subclass of graph-embedded convergent systems, called contracting convergent systems. On the one hand, we show that the knowledge problems are decidable for the subclass of contracting convergent systems. On the other hand, we show that the knowledge problems are undecidable for the class of graph-embedded systems. Going further, we compare and contrast these graph embedded systems with several notions and properties already known in the protocol analysis literature. 
		Finally, we provide several combination results, both for the combination of multiple contracting convergent systems, and then for the combination of contracting convergent systems with particular permutative equational theories. 
	\end{abstract}
	
	\maketitle        
	\section{Introduction}
	In this paper we introduce a new form of term rewrite system, called the graph-embedded term rewrite systems, and motivate the study and use of such rewrite systems by demonstrating their usefulness in the application of security protocols. 
	
	The research area of cryptographic protocol analysis contains a number of innovative algorithms and procedures for checking various security properties of protocols, 
	see for example~\cite{DBLP:journals/tcs/AbadiC06, DBLP:journals/tocl/BaudetCD13, DBLP:journals/tocl/ChadhaCCK16, DBLP:journals/jar/CiobacaDK12, Dreier2017}. These procedures consider protocols modeled in a symbolic way, typically via a rewrite system or equational theory. Often the procedure is proven sound and complete for specific classes  of theories. One of the most common classes are those theories that can be represented by subterm convergent term rewrite systems. That is, term rewrite systems where the right-hand side of the rules are strict subterms of the left-hand side or a constant. For example, see the procedures developed in~\cite{DBLP:journals/tcs/AbadiC06, DBLP:journals/jar/CiobacaDK12}. Interestingly, many of these same procedures also work for theories that are ``beyond subterm convergent''. That is, they are not strictly subterm convergent.  However, since these examples do not fit into a known class of theories for which soundness and completeness proofs already exist, they must be proven on an individual bases.  For example, the procedures of~\cite{DBLP:journals/tcs/AbadiC06, DBLP:journals/tocl/BaudetCD13, DBLP:journals/tocl/ChadhaCCK16, DBLP:journals/jar/CiobacaDK12, Dreier2017} are shown to work on the theory of blind signatures, see Example~\ref{example:blind-sig} below. However, the theory is not subterm convergent, notice in the final rule,
	$\unblind(\sign(\blind(x, y), z), y) \rightarrow \sign(x, z)$, that $\sign(x, z)$ is not a strict subterm of 
	$\unblind(\sign(\blind(x, y), z), y)$. Thus, in each case a unique proof is needed to show applicability of the procedure on the theory of blind signatures. Several additional examples of beyond subterm theories are given throughout this paper. This begs the question of whether there is a syntactic definition of a class of term rewrite systems such that the definition encapsulates these beyond subterm examples yet still maintains some of the useful properties needed to ensure applicability of the above procedures. 
	
	In this paper we answer the question in the positive by introducing first graph-embedded term rewrite systems and then a particular subclass called contracting rewrite systems. These systems are inspired by the notions of graph embeddings and graph minors. Here we are able to translate the notion to term rewrite systems. This translation is done in a very similar fashion to what has been done with  homeomorphic embeddings. We are able to provide a rewrite schema which induces graph-embedded systems in a similar way in which homeomorphic-embedded systems are induced by a rewrite system (see~\cite{Baader98} for more details).
	To the best of our knowledge these systems have not been explored before. We then explore some of the properties of these new systems. Interestingly, the graph-embedded systems encompass most of the beyond subterm examples from many of the protocol analysis procedure papers~\cite{DBLP:journals/tcs/AbadiC06, DBLP:journals/tocl/BaudetCD13,  DBLP:journals/tocl/ChadhaCCK16, DBLP:journals/jar/CiobacaDK12, Dreier2017}. As an initial step, in this paper we concentrate on the knowledge problems considered in~\cite{DBLP:journals/tcs/AbadiC06} using the notion of locally stable theories. Local stability is a desirable property which ensures the decidability of the critical symbolic security question of deducibility.
	In the class of graph-embedded convergent systems, we are now able to identify a particular subclass called the contracting convergent systems, which are beyond subterm convergent, encompass most of the beyond subterm examples of~\cite{DBLP:journals/tcs/AbadiC06, DBLP:journals/tocl/BaudetCD13, DBLP:journals/tocl/ChadhaCCK16, DBLP:journals/jar/CiobacaDK12, Dreier2017}, and are locally stable. As a consequence, the knowledge problems of deduction and static equivalence are decidable for the subclass of contracting convergent systems. We show that the knowledge problems are undecidable for the class of graph-embedded convergent systems in general. Going further, we also compare the graph-embedded systems
	to the Finite Variant Property (FVP), another useful property in protocol analysis, and show that the two do not define the same set of term rewrite systems but that a sub-class of graph embedding system, a restricted form of the contracting systems, can be defined 
	which always guarantees the FVP. We consider the cap problem~\cite{DBLP:conf/rta/AnantharamanNR07}, a related knowledge problem, showing it is also decidable for contracting systems, when certain signature requirements are met. In addition to the procedure in~\cite{DBLP:journals/tcs/AbadiC06} we also investigate graph-embedded term rewrite systems with
	the YAPA procedure~\cite{DBLP:journals/tocl/BaudetCD13}, showing that contracting graph-embedded term rewrite systems have the layered property, an important property for guaranteeing success of the YAPA procedure. Lastly we present several combination results for contracting convergent term rewrite systems. 
	
	This paper represents an exploration of graph-embedded term rewrite systems and their application to protocol analysis. We hope that the formulation proves useful in areas beyond security protocols as homeomorphic embeddings have proven useful in many areas. We conclude the paper with a discussion of several open questions related to graph-embedded systems. 
	
	An initial version of these results was presented in~\cite{DBLP:conf/fscd/SatterfieldEMR23}. The current paper expands on~\cite{DBLP:conf/fscd/SatterfieldEMR23} by including a number of new results
	and answering several questions left open in~\cite{DBLP:conf/fscd/SatterfieldEMR23}, specifically:
	\begin{itemize}
		\item We consider the cap problem, left open in~\cite{DBLP:conf/fscd/SatterfieldEMR23}, showing new decidability results (see Section~\ref{sec:relations-cap}).
		\item We expand on the FVP results of~\cite{DBLP:conf/fscd/SatterfieldEMR23} by including new positive results (see Section~\ref{sec:properties}).
		\item We answer a question on graph-embedded systems and the YAPA 
		procedure left open in~\cite{DBLP:conf/fscd/SatterfieldEMR23}, showing that a restricted form of graph embedded systems are layered (see Section~\ref{sec:relations-layered}).
		\item We have included new combination results between contracting convergent systems, which were not considered in~\cite{DBLP:conf/fscd/SatterfieldEMR23} (see Section~\ref{sec:combination}).
		\item Finally, we have made several improvements on the material of~\cite{DBLP:conf/fscd/SatterfieldEMR23}. We have simplified the definition of contracting convergent TRS, making the definition easier to understand. We have also fixed an issue with the proof of the undecidability of the knowledge problems in graph-embedded systems presented in~\cite{DBLP:conf/fscd/SatterfieldEMR23}, correcting a problem that allowed trivial solutions. 
	\end{itemize}

	\paragraph*{Paper Outline.}
	The remainder of the paper is organized as follows. Section~\ref{sec:prelim} contains the preliminaries, introducing the necessary background material on term-rewrite systems, graph theory and security protocol analysis.
	Section~\ref{sec:graphemb} introduces the graph-embedded term rewrite systems and explores some of their basic properties. Then, the next sections are related to the motivating application area of this paper for graph-embedded systems, security protocol analysis. The knowledge problems are shown to be undecidable for the class of graph-embedded convergent systems (Section~\ref{sec:undecidable-know}) and decidable for the subclass of contracting convergent systems (Section~\ref{sec:decidable-know}).
	Section~\ref{sec:relations} considers the relation to the cap problem, to the FVP, and to the YAPA procedure. Section~\ref{sec:combination} considers the combination question in contracting convergent systems. Finally, Section~\ref{sec:conclusions} contains the concluding remarks, future work, and open problems. 
		
	\section{Preliminaries}\label{sec:prelim}
	We use the standard notation of equational
	unification~\cite{BaaderSnyd-01} and term rewriting
	systems~\cite{Baader98}.  Given a first-order signature $\Sigma$ and a
	(countable) set of variables $V$, the $\Sigma$-terms over
	variables $V$ are built in the usual way by taking into account the arity of each function symbol in $\Sigma$. Arity $0$ function symbols are called \emph{constants}. Each $\Sigma$-term is well-formed: if it is rooted by a $n$-ary function symbol in $\Sigma$, then it has necessarily $n$ direct subterms. 
	The set of $\Sigma$-terms over variables $V$ is denoted by $T(\Sigma,V)$. 
	The set of variables occurring in a term $t$ is denoted by $\Var(t)$.
        A term $t$ is \emph{ground} if $\Var(t) = \emptyset$.
        A $\Sigma$-rooted term is a term whose root symbol
	is in $\Sigma$. For any position~$p$ in a term $t$ (including
	the root position $\epsilon$), $t(p)$ is the symbol at position $p$,
	$t|_p$ is the subterm of $t$ at position $p$, and $t[u]_p$ is the term
	$t$ in which $t|_p$ is replaced by $u$.  A substitution is an
	endomorphism of $T(\Sigma,V)$ with only finitely many variables not
	mapped to themselves. A substitution is denoted by $\sigma = \{ x_1
	\mapsto t_1, \dots, x_m \mapsto t_m \}$, where the domain of $\sigma$
	is $Dom(\sigma) = \{ x_1,\dots,x_m \}$ and the range of $\sigma$ is
	$Ran(\sigma) = \{ t_1,\dots,t_m \}$.  Application of a substitution
	$\sigma$ to $t$ is written $t\sigma$. 

        Given a term $t$ and a function symbol or variable $\mathit{fv}$, $|t|_{\mathit{fv}}$ denotes the number of occurrences of $\mathit{fv}$ in $t$. 
	The size of a term $t$, denoted by $|t|$, is defined inductively as follows: $|f(t_1,\dots,t_n)| = 1 + \Sigma_{i=1}^n |t_i|$ if $f$
	is a $n$-ary function symbol with $n\geq 1$, $|c| = 1$ if $c$ is a
	constant, and $|x| = 1$ if $x$ is a variable. 
        The depth of a term $t$, denoted by $depth(t)$, is defined inductively as follows: $depth(f(t_1,\dots,t_n)) = 1 + \max_{i=1,\dots, n} depth(t_i)$ if $f$ a $n$-ary function symbol with $n\geq 1$, $depth(c) = 0$ if $c$ is a constant, and $depth(x) = 0$ if $x$ is a variable.

        For any term $t$, let $st(t)$ be the set of subterms of $t$, and given any set of terms $T$, let $st(T) = \bigcup_{t \in T} st(t)$.

	Let $V_{C}$ denote a finite set of context variables (sometimes called holes) such that $V_{C} \cap V = \emptyset$. We use the notation $\diamond_i$ to represent the context variable $\diamond_i \in V_{C}$. A \emph{context} is a \emph{linear} term containing only context variables. More formally, a context is a term, $C \in T(\Sigma, V_{C})$, where each variable (context hole) occurs at most once. Thus, the size of a context follows from the size of a term, where any hole occurrence counts for $1$.
        Given a context $C$ with $m$ variables and $m$ terms $S_1,\dots,S_m \in T(\Sigma, V)$, $C[S_1,\dots,S_m]$ denotes the term $C\{ \diamond_1 \mapsto S_1,\dots,\diamond_m \mapsto S_m \}$ where for $i=1,\dots,m$, $\diamond_i$ denotes the $i$-th context variable occurring in the term $C$, and $C$ is called the \emph{context part} of $C[S_1,\dots,S_m]$. When $S_1,\dots,S_m \in \mathcal{S}$, $C[S_1,\dots,S_m]$ is said to be a \emph{context instantiated with terms in $\mathcal{S}$}. Given a positive integer $n$, $C[S_1,\dots,S_m]$ is said to be \emph{small} (with respect to $n$) if its context part $C$ satisfies $|C| \leq n$.

	\paragraph*{Equational Theories}
	
	Given a set $E$ of
	$\Sigma$-axioms (i.e., pairs of terms in $T(\Sigma,V)$, denoted by $l = r$),
	the \emph{equational theory} $=_E$ is the congruence closure of $E$
	under the law of substitutivity (by a slight abuse of terminology, $E$
	is often called an equational theory).
	Equivalently, $=_E$ can be defined as the reflexive transitive closure $\leftrightarrow_{E}^*$ of an equational step $\leftrightarrow_{E}$ defined as follows:
	$s \leftrightarrow_{E} t$ if there exist a
	position $p$ of $s$, $l = r$  (or $r=l$) in $E$, and substitution
	$\sigma$ such that $s|_p = l\sigma$ and $t = s[r\sigma]_p$. An equational theory $E$ is said to be \emph{permutative} if for any $l=r$ in $E$, the number of occurrences of any (function or variable) symbol in $l$ is equal to the number of occurrences of that symbol in $r$. Well-known theories such as Associativity ($A = \{
(x+y)+z = x+(y+z) \}$), Commutativity ($C = \{ x + y = y + x \}$), and
Associativity-Commutativity ($AC = A \cup C$) are permutative
theories. A theory $E$ is said to be \emph{shallow} if variables can only occur at a depth at most $1$ in axioms of $E$. For example, $C$ is shallow but $A$ and $AC$ are not. Any constant in $\Sigma$ not occurring in $E$ is said to be \emph{free}.

	\paragraph*{Rewrite Relations}
	
	A \emph{term rewrite system} (TRS) is a pair $(\Sigma, R)$, where
	$\Sigma$ is a signature and $R$ is a finite set of rewrite rules of
	the form $l \rightarrow r$ such that $l,r$ are $\Sigma$-terms, $l$ is
	not a variable and $\Var(r) \subseteq \Var(l)$.
        A function symbol $f$ is a \emph{constructor} for $R$ if $\{f\} \cap \{ l(\epsilon) ~|~ l \rightarrow r \in R \} = \emptyset$.
	A term $s$ \emph{rewrites} to a term $t$ w.r.t $R$, denoted
	by $s \rightarrow_R t$ (or simply $s \rightarrow t$), if there exist a
	position $p$ of $s$, $l \rightarrow r \in R$, and substitution
	$\sigma$ such that $s|_p = l\sigma$ and $t = s[r\sigma]_p$.
	If the rewrite step occurs at the root position (resp., at some non-rooted position) of the term $s$ we denote this as $s \rightarrow_{R}^{\epsilon} t$ (resp., $s\stackrel{\neq \epsilon}{\rightarrow_R} t$).  
        When $\sigma$ is a variable renaming, we say that $s$ rewrites to $t$ applying a \emph{variable instance} of $l \rightarrow r$.
        The reflexive transitive closure of $\rightarrow_R$ is denoted by $\rightarrow_R^*$. We write $\rightarrow_R^{0,1}$ to denote $0$ or $1$ rewrite step w.r.t $R$. The composition of relations is denoted by the $(.)$ operator; for instance, $\rightarrow_R . \rightarrow_R$ is defined as follows: for any $s,u$, $s \rightarrow_R . \rightarrow_R u$ if there exists some $t$ such that $s \rightarrow_R t \rightarrow_R u$.
        A TRS $R$ is \emph{terminating} if there are no infinite rewriting sequences with respect to $\rightarrow_R$.  A TRS $R$
	is \emph{confluent} if, whenever $t \rightarrow_R^{*} s_1$ and $t \rightarrow_R^{*} s_2$, there exists a term $w$ such that $s_1 \rightarrow_R^{*} w$ and
	$s_2 \rightarrow_R^{*} w$.  A confluent and terminating TRS is
	called \emph{convergent}. In a convergent TRS $R$, we have the existence and
	the uniqueness of $R$-normal forms, denoted by $t\da_R$ for any
	term $t$. When $R$ is clear from the context, the normal form of $t$ may be written $t\da$. Given a substitution $\sigma$, $\sigma\da = \{ x \mapsto (x\sigma)\da \}_{x \in Dom(\sigma)}$ is the substitution corresponding to the normal form of $\sigma$. A substitution $\sigma$ is in \emph{$R$-normal form} (or \emph{$R$-normalized}) if $\sigma = \sigma\da$. A rewrite derivation w.r.t $R$ ended by a $R$-normal form is denoted by $\rightarrow_R^!$.

	A rewrite rule is said to be {\em subterm} if its right-hand side is either a strict subterm of its left-hand side or a constant. A TRS $R$ is {\em subterm} if any rule in $R$ is subterm. A TRS $R$ is said to be {\em subterm convergent} if it is both convergent and subterm. An equational theory, $E$, is {\em
		subterm convergent} if it is presented by a subterm convergent
	TRS. That is, there exists a subterm convergent TRS, $R$, such that $=_E$
	and $=_R$ coincide. By a slight abuse of notation, the equational theory of $R$, given by $\{ l = r ~|~ l \rightarrow r \in R \}$, may also be denoted by $R$. The \emph{size} of a TRS $R$ is defined as 
	\[c_{R} = max_{\{ l ~|~ l \rightarrow r \in R\}}(|l|, ar(R) + 1)\] 
	where $ar(R)$ is the maximal arity of any function symbol occurring in $R$.  
	
	\begin{rem}
		We should note that in some papers the definition of subterm convergent is expanded to also include ground terms on the right-hand side. We will use a more classical definition here, allowing only a right-hand side which is either a strict subterm of the left-hand side or a constant.
	\end{rem}       
	
	\begin{defi}[Homeomorphic Embedding]\label{def:hom-emd}
		The \emph{homeomorphic embedding}, $\trianglerighteq_{emb}$ is a 
		binary relation on terms such that: 
		$s \trianglerighteq_{emb} t$ if one of the following conditions hold:
		\begin{enumerate}
			\item $s = x = t$ for some variable $x$,
			\item $s = f(s_1, \ldots, s_n)$ and 
			$t=f(t_1, \ldots, t_n)$ and 
			$s_1 \trianglerighteq_{emb} t_1, \ldots, 
			s_n \trianglerighteq_{emb} t_n$,
			\item $s = f(s_1, \ldots, s_n)$ and 
			$s_i \trianglerighteq_{emb} t$ for some $i$,  
			$1 \leq i \leq n$.
		\end{enumerate}
		A TRS $R$ is said to be a {\em homeomorphic-embedded} TRS if for any $l \rightarrow r \in R$, $l \trianglerighteq_{emb} r$.
		
	\end{defi}
	
	More interestingly we can also define $\trianglerighteq_{emb}$
	as the reduction relation $\rightarrow^{*}_{R_{emb}}$ 
	induced by the rewrite system
	$R_{emb} = \{f(x_1, \ldots, x_n) \rightarrow x_i ~| 
	~ f \mbox{ is $n$-ary}, n \geq 1, ~1 \leq i \leq n \}$. 
	
	\begin{exa}[Blind Signatures]\label{example:blind-sig}
		The theory of blind signatures~\cite{DBLP:journals/jar/CiobacaDK12} is a\\ homeomorphic-embedded convergent TRS:
		\begin{align*}
			\checksign(\sign(x,y),\pk(y)) &\rightarrow x\\
			\unblind(\blind(x, y), y) &\rightarrow x \\
			\unblind(\sign(\blind(x, y), z), y) &\rightarrow \sign(x, z)
		\end{align*}
		
	\end{exa}

	\paragraph*{Notions of Knowledge}
	
	The applied pi calculus and frames are used to model attacker
	knowledge~\cite{Abadi2001}. In this model, the set of messages or
	terms which the attacker knows, and which could have been obtained
	from observing a protocol session, are the set of terms given by the frame $\phi = \nu \tilde{n}.\sigma$, where
	$\sigma$ is a substitution ranging over \emph{ground terms}. We also need to
	model cryptographic concepts such as nonces, keys, and publicly known
	values. We do this by considering terms built using a signature $\Sigma$ which is assumed to contain a countable set of free constants, including a countable set of names $N$. In this context, we need to track the names which the attacker knows, such as public values, and the names which the attacker does
	not know a priori, such as freshly generated nonces.
        In a frame $\nu \tilde{n}.\sigma$, $\tilde{n}$
	is a finite set of names  in $N$, these names represent
	freshly generated nonces which the attacker does not initially know.  The
	set of names occurring in a term $t$ is denoted by $\fn(t)$. For any frame $\phi=\nu \tilde{n}.\sigma$, let $\fn(\phi)$ be the set of names $\fn(\sigma) \backslash \tilde{n}$ where $\fn(\sigma) = \bigcup_{t \in Ran(\sigma)} \fn(t)$; and for any term $t$, let $t\phi$ denote by a slight abuse of notation the term $t\sigma$. We say that a term $t$ is {\em $\phi$-restricted}, or simply {\em restricted} when $\phi$ is clear from the context, if $\fn(t) \cap
	\tilde{n} = \emptyset$. Given any convergent TRS $R$, the frame $\phi$ is \emph{$R$-normalized} if $\sigma$ is $R$-normalized.

	\begin{defi}[Deduction]
		\label{def:deduction}  
		Let $\phi = \nu \tilde{n}.\sigma$ be a frame, and $t$ a ground term. 
		We say that $t$ is {\em deduced from $\phi$ modulo $E$}, denoted by 
		$\phi \vdash_E t$, if there exists a term $\zeta$ such that $\zeta\sigma =_{E} t$ and $\fn(\zeta) \cap \tilde{n} = \emptyset$. The term $\zeta$ is called a \emph{recipe} of $t$ in $\phi$ modulo $E$. 
	\end{defi}
	
	Another form of knowledge is the ability to tell if two frames are 
	\emph{statically equivalent} modulo $E$, sometimes also called 
	\emph{indistinguishability}.

	\begin{defi}[Static Equivalence]
		Two terms $s$ and $t$ are {\em equal} 
		in a frame $\phi =  \nu \tilde{n}.\sigma$ modulo an 
		equational theory $E$, denoted $(s =_E t) \phi$, if
		$s\sigma =_E t\sigma$, and 
		${ \tilde{n} } \cap (\fn(s)\cup \fn(t)) = \emptyset$.
		The set of all equalities $s=t$ such that $(s=_E t)\phi$ is denoted by $Eq(\phi)$. Given a set of equalities $Eq$, the fact that $(s=_E t)\phi$ for any $s = t \in Eq$ is denoted by $\phi \models Eq$.
		Two frames $\phi = \nu \tilde{n}.\sigma$ and $\psi= \nu \tilde{n}.
		\tau$ are {\em statically equivalent modulo $E$}, denoted as 
		$\phi \approx_E \psi$, if $Dom(\sigma) = Dom(\tau)$, $\phi \models Eq(\psi)$ and $\psi \models Eq(\phi)$.  
	\end{defi}
	
	\begin{rem}
		Again the above definition relies on the use of the frame definition from the pi calculus. One can view the set $\tilde{n}$ of bound names as those secrets the attack does not have initial knowledge of for any run of the protocol. Each of the frames thus represent runs of the protocols when the attacker can then compare. 
	\end{rem}

        Since the equational theory of $R$ may be simply denoted by $R$, we may write for instance $\phi \vdash_R t$ when $t$ is deduced from $\phi$ modulo (the equational theory of) $R$. 

        The local stability property was introduced in~\cite{DBLP:journals/tcs/AbadiC06} and improved in~\cite{DBLP:journals/tcs/Ayala-RinconFN17} as a collection of conditions which are sufficient to obtain the decidability of deduction thanks to a reduction to deduction modulo the empty theory. A simplified definition of this property is introduced below. It is simplified because we do not consider $AC$-symbols as in~\cite{DBLP:journals/tcs/AbadiC06, DBLP:journals/tcs/Ayala-RinconFN17}. 
        
	\begin{defi}[Local Stability~\cite{DBLP:journals/tcs/AbadiC06}]\label{def:locally-stable} A convergent TRS, $R$, is \emph{locally stable}
		if, for every $R$-normalized frame $\phi = \nu \tilde{n}.\sigma$, there exists a finite set $sat(\phi)$ of ground terms such that:
		\begin{itemize}
			\item $Ran(\sigma) \subseteq sat(\phi)$ and $\fn(\phi) \subseteq sat(\phi)$;
			\item if $M_1, \ldots, M_k \in sat(\phi)$ and $f(M_1, \ldots, M_k) \in st(sat(\phi))$, then
			$f(M_1, \ldots, M_k) \in sat(\phi)$;
			\item if $C[S_1, \ldots, S_l] \rightarrow_{R}^{\epsilon} M$, where $C$ is a context with $|C| \leq c_{R}$ and
			$\fn(C) \cap \tilde{n} = \emptyset$, and $S_1,\ldots, S_l \in sat(\phi)$, then there exist a context $C'$
			and $S_1',\ldots, S_k' \in sat(\phi)$ such that $|C'| \leq c_{R}^2$, 
			$\fn(C') \cap \tilde{n} = \emptyset$, and $M \rightarrow_{R}^* C'[S_1', \ldots, S_k']$;
			\item if $M \in sat(\phi)$ then $\phi \vdash_{R} M$.
		\end{itemize}
	\end{defi}

Beside local stability, the additional property of local finiteness is useful for the decidability of static equivalence.
In a locally finite theory, only finitely many equalities have to be checked for static equivalence. Helpfully, in a convergent TRS with no $AC$-symbols, local stability implies local finiteness~\cite{DBLP:journals/tcs/AbadiC06}.
Since we do not consider $AC$-symbols in this paper,  the approach presented in~\cite{DBLP:journals/tcs/AbadiC06} leads to the following result:

\begin{thmC}[\cite{DBLP:journals/tcs/AbadiC06}]
\label{thm:know-by-local-stability}
  The deduction and static equivalence problems are both decidable in any locally stable TRS.
\end{thmC}  

Since any subterm convergent TRS is locally stable, the decidability of both deduction and static equivalence in the class of subterm convergent TRSs follows from Theorem~\ref{thm:know-by-local-stability}. 
	\begin{exa}
		Let $E$ be the equational theory presented by the subterm convergent TRS $\{ \dec(\enc(x,y),y) \rightarrow x \}$. Notice that this theory is subterm convergent and contains no AC symbols. Thus, from Theorem~\ref{thm:know-by-local-stability} static equivalence is decidable. If we are given two frames, $\phi = \nu \{ n \}.\{ v \mapsto \enc(a,n) \}$ and $\psi = \nu \{ n\}.\{ v \mapsto \enc(b,n) \}$, then we can apply the saturation based procedure defined in~\cite{DBLP:journals/tcs/AbadiC06} to find that these two frames are statically equivalent modulo $E$. However, consider now $\phi' = \nu \{ n \}.\{ v \mapsto \enc(a,n), w \mapsto n \}$ and $\psi' = \nu \{ n \}.\{ v \mapsto \enc(b,n), w \mapsto n \}$. Since $\dec(v,w) =_{E} a \in Eq(\phi')$ but $\dec(v,w) =_{E} a \notin Eq(\psi')$, $\phi'$ and $\psi'$ are not statically equivalent modulo $E$.   
	\end{exa}

In this paper, we lift this result to rewrite systems that are beyond the class of subterm convergent rewrite systems.

	\paragraph*{Term Graphs}
	
	Each term $t$ can be viewed in a graphical representation, called a 
	\emph{term graph}. Each vertex in the graph is labeled either by a function symbol or a variable. Each function symbol vertex also has an associated successor number, corresponding to the arity of the function. Edges connect the vertices of the term graph based on the subterm relation. The notion of term graph is illustrated in Examples~\ref{example: graph-emb-ex1} and~\ref{example:mal-encrypt}.

	\paragraph*{Some Graph Theory}
	
	We will also need a few notions from graph theory, we introduce those in this section. We will typically use $G$
	to denote a graph, $V$ the set of vertices and $E \subseteq V \times V$ the set of edges of the graph. 
	
	\begin{defi}[Graph Isomorphism]
		Let $G= (V, E)$ and $G' = (V', E')$ be two graphs. We say that $G$ and $G'$ are isomorphic, 
		denoted $G \simeq G'$, if there exists a bijection $\phi :V \rightarrow V'$ with
		$xy \in E $ iff $\phi(x)\phi(y) \in E'$, $ \forall x, y \in V$.
	\end{defi}
	
	\begin{defi}[Edge Contraction]\label{def:edge-contracting}
		Let $G = (V, E)$ and $e = xy$. $G/e$ is the graph $G' = (V', E')$ such that $V' = (V \setminus \{x, y\}) \cup \{v_e \}$, where
		$v_e$ is a new vertex, and $E' = \{ vw \in E ~| ~\{v, w\} \cap \{x, y\} = \emptyset \} \cup 
		\{ v_e w ~| ~xw \in E \setminus \{ e \} \text{ or } yw \in E \setminus \{e\} \}$.
		
		We say that $G'$ is obtained from $G$ by \emph{contracting} the edge $e$.
	\end{defi}
	
	We use the following definition of graph minor which essentially says that a graph minor of a graph $G$ can be obtained by a series of graph contractions 
	(see~\cite{graphtheorythrded} for more details).
	
	\begin{defi}[Graph Minor]\label{def:graph-emb}
		The graph $G'$ is a \emph{graph minor} of the graph $G$, if there exist graphs
		$G_0, ~G_1, ~\ldots, ~G_n$ and edges $e_i \in G_i$ such that $G= G_0$, $G_n \simeq G'$, and $G_{i+1} = G_i/e_i$ for $i=0,\ldots,n-1$.
		We use the notation $G\succcurlyeq G'$ if $G'$ is a graph minor of $G$.
		
		Next, we extend the graph minor definition above to terms using term graphs. 
		For terms $t$ and $t'$, $t \succcurlyeq t'$ if for the 
		corresponding term graphs of $t$ and $t'$, denoted as $G$ and $G'$
		respectively, we have $G \succcurlyeq G'$.
	\end{defi}
	
	\begin{rem}
		Note, in the classical definition of graph minor (see~\cite{graphtheorythrded}), if $G$ is a subgraph of a larger graph $G_{large}$ and $G \succcurlyeq G'$, then also $G_{large} \succcurlyeq G'$. However, this component of the definition is not necessary for the results of this paper and by leaving it 
		out we are able to simplify the later definitions and presentation.   
		
		The above type of embedding, denoted by $\succcurlyeq$, provides more flexibility than the traditional subterm relation while still  preserving some features we need. 
		
	\end{rem}
	
	\begin{exa}\label{example:contractions}
		Notice that $G'$ is obtained from $G$ by first applying a sequence of edge contractions, contracting the edge depicted by $||$ at each step, resulting in $G_2$, and finally $G_2 \simeq G'$. 
		Therefore, $G \succcurlyeq G'$.

		\begin{tabular}{cccc}
			\xymatrix{
				*++=[o][F]{~} \ar@{=}[d] & *++=[o][F]{~} \ar@{-}[dl] \ar@{-}[d] \ar@{-}[dr]& *++=[o][F]{~} \ar@{-}[d] \\
				*++=[o][F]{~} & *++=[o][F]{~} & *++=[o][F]{~} \\
				& G &
			}
			&
			\xymatrix{
				& *++=[o][F]{~} \ar@{-}[dl] \ar@{-}[d] \ar@{-}[dr]& *++=[o][F]{~} \ar@{=}[d] \\
				*++=[o][F]{~} & *++=[o][F]{~} & *++=[o][F]{~} \\
				& G_1 &   
			}
			&
			\xymatrix{
				& *++=[o][F]{~} \ar@{-}[dl] \ar@{-}[d] \ar@{-}[dr]& \\
				*++=[o][F]{~} & *++=[o][F]{~} & *++=[o][F]{~}\\
				& G_2 &   
			}
			&
			\xymatrix{
				& *++=[o][F]{~} \ar@{-}[d] & \\
				*++=[o][F]{~} \ar@{-}[r] & *++=[o][F]{~} \ar@{-}[r] & *++=[o][F]{~} \\
				& G' &
			}
		\end{tabular}
		
	\end{exa}
	
	We can now extend the above graph-theoretic notions to the term rewrite setting. 
	
	\section{Graph-Embedded Systems}\label{sec:graphemb}
	The key to translating from the graph theory setting to the term setting is to use the same method, \emph{edge contractions}, but then only consider the term graphs representing \emph{well-formed} terms. That is, we need to enforce the notion of a well formed term.  
	
	To begin we need to model the graph isomorphism. A \emph{restricted} form of isomorphism can be translated into the term rewriting setting by considering permutations.
	
	\begin{defi}[Leaf and Subterm Permutations]\label{def:permutative1and2}
		We define two types of permutations, $\approx_s$ and $\approx_l$:
		\begin{enumerate}
			\item 
			For terms $t$ and $t'$, we say $t$ is \emph{subterm permutatively} equal to $t'$, denoted $t \approx_s t'$, if one of the following
			is true:
			\begin{enumerate}
				\item $t = t'$, where $t$ and $t'$ are constants or variables, or
				\item  $t= f(u_1, \ldots, u_n)$ and $t' =f(u_{\sigma(1)}, \ldots, u_{\sigma(n)})$ where $f$ is a $n$-ary function symbol, $n\geq 1$, and $\sigma$ is a permutation of the indexes $(1, \ldots, n)$. 
			\end{enumerate}
			\item 
			For terms $t$ and $t'$, we say $t$ is \emph{leaf permutatively} equal to $t'$, denoted $t \approx_l t'$, if $t'$ is obtained from $t$ via a permutation of variables and constants occurring in $t$.
		\end{enumerate}	
	\end{defi}
	The first type of permutation, $\approx_s$, allows for permutation inside the term but preserves the layer like structure of the function symbols in the term graph. The second type of permutation in the classical leaf permutability and is restricted to the leaf vertices, i.e., just the variables and constants of the term graph. 
	We will use a combination of the above two permutations in the definition employed for graph-embedded TRS. 
	\begin{defi}[Permutatively Equal]\label{def:permutative}
		For terms $t$ and $t'$, we say $t$ is \emph{permutatively} equal to $t'$, denoted $t \approx t'$, if $t \approx_s t'' \approx_l t'$, for some term $t''$.
	\end{defi}
	
	\begin{rem}
		It is useful here to remark on the motivation of the above definition, $\approx$. The goal is to model the graph isomorphism property. At the same time one needs to be careful not to be too broad and
		remove layer preserving properties of Definition~\ref{def:trs-graph-emb} and thus later protocol properties such as local stability (see Definition~\ref{def:locally-stable}). In addition, one cannot be too restricted and disallow working protocol representations such as Example~\ref{example:add} which requires more than just leaf permutability. However, it may be possible to improve upon the above definition and allow for additional systems while still maintaining the decidability of the knowledge problems shown here, see the discussion in Section~\ref{sec:conclusions}.
	\end{rem}
	
	The next step is to develop a set of rewrite \emph{schemata} which preserve a type of graph minor relation on the term graphs. This set of rewrite schemata then induces a graph-embedded term rewrite system. Notice that this is very similar to what is often done when considering the homeomorphic embeddings, see Definition~\ref{def:hom-emd}.
	
	\begin{defi}[Graph Embedding]\label{def:trs-graph-emb}
		Consider the following reduction relation, $\rightarrow^{*}_{R_{gemb}}$, where $R_{gemb}$ is the set of rules given by the instantiation of the following rule schema: 
		\[\left\{
		\begin{array}{ll}
			& \text{for any } f \in \Sigma \\
			(1) & f(x_1, \ldots, x_n) \rightarrow x_i\\ 
			(2) &f(x_1, \ldots, ~x_{i-1}, ~x_i, ~x_{i+1}~\ldots, ~x_n) \rightarrow f(x_1, \ldots, ~x_{i-1}, ~x_{i+1}, ~\ldots, ~x_n) \\
			& \text{and for any } f, g \in \Sigma \\
			(3) & f(x_1, \ldots, x_{i-1}, g(\bar{z}), x_{i+1}, ~\ldots, x_m) \rightarrow g(x_1, \ldots, ~x_{i-1}, \bar{z}, x_{i+1}, ~\ldots, x_m) \\
			(4) &f(x_1, \ldots, x_{i-1}, g(\bar{z}), x_{i+1}, ~\ldots, x_m) \rightarrow f(x_1, \ldots, ~x_{i-1}, \bar{z}, x_{i+1}, ~\ldots, x_m) 
		\end{array} 
		\right\} \]
		
		We say a term $t'$ is \emph{graph-embedded} in a term $t$, denoted $t \succcurlyeq_{gemb} t'$,  
		if $t'$ is a well formed term and there exists a term $s$ such that $t \rightarrow_{R_{gemb}}^{*} s \approx t'$. 
		
		A TRS $R$ is \emph{graph-embedded} if for any $l \rightarrow r \in R$, $l \succcurlyeq_{gemb} r$ or $r$ is a constant.
		
	\end{defi}
	
	\begin{rem}
		Notice that the rules in $R_{gemb}$ (cf. Definition~\ref{def:trs-graph-emb}) ignore function arity, thus intermediate terms between $t$ and $t'$ may not be well formed. It is only the final term for which function arity and the relation between variables and functions
		must obey the standard term definition requirements. 
		The schemata are being used to establish the graph-embedded property of a TRS.   
	\end{rem}

	\begin{rem}
		The rules in $R_{gemb}$ (cf. Definition~\ref{def:trs-graph-emb}) provide a convenient schemata for defining graph-embedded systems. However, they are also very useful in proving properties about graph-embedded systems. Notice that any rewrite step in a graph-embedded system corresponds to one or more steps of the above rules, thus proofs about graph-embedded TRSs can often be reduced to arguments on the properties of the rules in $R_{gemb}$.
	\end{rem}
        
	Definition~\ref{def:trs-graph-emb} provides a rewrite relation
	interpretation of graph-embedded systems which is contained in the $ \succcurlyeq$ relation given in Definition~\ref{def:graph-emb}. This is due to the fact that if one applies one of the rule schema from Definition~\ref{def:graph-emb} to a term, $t$, this correspond to one or more edge contractions applied to the term graph of $t$, therefore we obtain the following:

	\begin{lem}\label{lemma:contained-graph-emb}
		For any terms $t$ and $t'$, $t \succcurlyeq_{gemb} t'$ implies  $t \succcurlyeq t'$, i.e., $\succcurlyeq_{gemb} ~\subseteq ~\succcurlyeq$.
	\end{lem}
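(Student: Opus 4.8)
The plan is to unfold the very slogan the paper states just before the lemma: one step of $R_{gemb}$ is, on the term graph, a finite sequence of edge contractions. Write $G_u$ for the term graph of a term $u$; note that $G_u$ is a well-defined labelled \emph{tree} even when $u$ is ill-formed, so the fact that $R_{gemb}$ ignores arities causes no trouble. I would first record two easy facts. (i) The minor relation $\succcurlyeq$ of Definition~\ref{def:graph-emb} is transitive: concatenating two contraction sequences, after transporting the second along the isomorphism witnessing the end of the first (contraction commutes with isomorphism, since $G/e$ in Definition~\ref{def:edge-contracting} depends only on the graph up to isomorphism), again gives a contraction sequence. (ii) If $s \approx t'$ then $G_s$ and $G_{t'}$ are isomorphic: $\approx_s$ only reorders the direct subterms of the root and $\approx_l$ only relabels leaves by leaves, so both leave the underlying unlabelled tree unchanged, and graph isomorphism ignores labels in any case. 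Consequently $G \succcurlyeq G_s$ implies $G \succcurlyeq G_{t'}$ for any $G$, by replacing the final isomorphism in the chain.

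The core step is the claim: if $u \rightarrow_{R_{gemb}} v$, then $G_u \succcurlyeq G_v$. The rewrite takes place at some position $p$; the subtree of $G_u$ rooted at the node at $p$ is the term graph of the redex $l\sigma$, the part of $G_u$ strictly outside this subtree coincides with the corresponding part of $G_v$, and in $G_v$ the subtree at $p$ is the term graph of the contractum $r\sigma$. So it suffices, for each of the four schemas, to exhibit a sequence of edge contractions turning the redex subtree into the contractum subtree while keeping the edge from $p$ to its parent intact (contracting an edge incident to the node at $p$ simply re-roots: the new vertex $v_e$ inherits that parent edge by Definition~\ref{def:edge-contracting}). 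For schemas~(3) and~(4), a single contraction of the edge between the $f$-node at $p$ and its child $g$-node suffices: the merged vertex acquires exactly the direct subterms of $r\sigma$ as children, plus the parent edge. For schema~(2), first contract the edge from the $f$-node to the root of the $i$-th argument $x_i\sigma$, then cascade downwards, repeatedly contracting the merged vertex with one of its children until the whole subtree $x_i\sigma$ has been absorbed; since term graphs are trees this never produces loops or multi-edges, and the result is the term graph of $f(\dots,x_{i-1}\sigma,x_{i+1}\sigma,\dots)$ hanging at $p$. Schema~(1) is the same cascade applied to every argument subtree except the retained $x_i\sigma$, leaving a copy of the term graph of $x_i\sigma$ at $p$.

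Performing these contractions inside $G_u$ is legal (all the edges are present, and none lies strictly above $p$), and yields a graph isomorphic to $G_v$; hence $G_u \succcurlyeq G_v$. Now assume $t \succcurlyeq_{gemb} t'$, i.e. $t \rightarrow_{R_{gemb}}^* s \approx t'$ with $t'$ well formed. Induction on the length of the derivation, using the core claim and transitivity (fact (i)), gives $G_t \succcurlyeq G_s$; fact (ii) applied to $s \approx t'$ then gives $G_t \succcurlyeq G_{t'}$, which by the term-graph extension of Definition~\ref{def:graph-emb} is exactly $t \succcurlyeq t'$. I expect the only delicate point to be the bookkeeping for schemas~(1) and~(2): checking that the downward cascade of contractions terminates with precisely the term graph of the contractum, and that contracting an edge incident to the node at $p$ re-roots its subtree correctly without detaching it from, or otherwise altering, the surrounding context — everything else is routine.
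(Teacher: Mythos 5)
Your proposal is correct and follows exactly the argument the paper itself gives: the paper offers only the one-sentence justification preceding the lemma (each application of a rule schema corresponds to one or more edge contractions on the term graph, with $\approx$ absorbed by the isomorphism clause of Definition~\ref{def:graph-emb}), and your write-up is a careful elaboration of that same decomposition into per-schema contraction sequences, transitivity, and induction on the derivation length.
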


	\begin{exa}\label{example: graph-emb-ex1}
		Consider the two terms $t =f(h(a, b), h(c,d))$ and $t' = f(d, a)$.
		Then, $t \succcurlyeq_{gemb} t'$, since $t \rightarrow_{R_{gemb}}^{*} s \approx t'$ where the derivation $t \rightarrow_{R_{gemb}}^{*} s$ is as follows:
		
		\begin{tabular}{ccccc}
			
			\xymatrix@-1.5pc{
				& & & *+[o][F]{f} \ar@{-}[dll] \ar@{-}[drr] & & & \\
				& *+[o][F]{h} \ar@{-}[dl] \ar@{-}[dr] & & & & *+[o][F]{h} \ar@{-}[dl] \ar@{-}[dr] & \\
				*+[o][F]{a} & & *+[o][F]{b} & & *+[o][F]{c} & & *+[o][F]{d}\\
				& & & t & & &  
			}
			& $\rightarrow_{R_{gemb}}^*$ & 
			\xymatrix@-1.5pc{
				& & & *+[o][F]{f} \ar@{-}[dll] \ar@{-}[drr] & & & \\
				& *+[o][F]{h} \ar@{-}[dl]  & & & & *+[o][F]{h} \ar@{-}[dr] & \\
				*+[o][F]{a} & &  & & & & *+[o][F]{d}\\
				& & & u & & &
			}
			& $\rightarrow_{R_{gemb}}^*$ & 
			\xymatrix@-1.5pc{
				& *+[o][F]{f} \ar@{-}[dl] \ar@{-}[dr] & \\
				*+[o][F]{a} & & *+[o][F]{d}\\
				& s &
			}
			
		\end{tabular}

	\end{exa}

	\begin{exa}[Malleable Encryption]\label{example:mal-encrypt}
		Consider the theory of Malleable Encryption, $R_{mal}$:
		\begin{align*}
			\dec(\enc(x, y), y) & \rightarrow x \\
			\mal(\enc(x,y), z) & \rightarrow \enc(z, y) 
		\end{align*}
		For the second rule, let $t_1 = \mal(\enc(x,y), z)$, and consider the following derivation $t_1 \rightarrow_{R_{gemb}}^* t_3$:
		
		\medskip
		
		\begin{tabular}{ccccc}
			
			\xymatrix@-1.5pc{
				& & & *++=[o][F]{mal} \ar@{-}[dll] \ar@{-}[dr] & \\
				& *++=[o][F]{enc} \ar@{-}[dl] \ar@{-}[dr] & & & *+[o][F]{z} \\
				*+[o][F]{x} & & *+[o][F]{y} & & \\
				& & & t_1 & 
			}
			& $\rightarrow_{R_{gemb}}^{\textit{rule (3)}}$ &
			\xymatrix@-1pc{
				& *++=[o][F]{enc} \ar@{-}[dl] \ar@{-}[dr] \ar@{-}[d] & \\
				*+[o][F]{x} & *+[o][F]{y} & *+[o][F]{z} \\
				& t_2 & 
			}
			& $\rightarrow_{R_{gemb}}^{\textit{rule (2)}}$
			\xymatrix@-1pc{
				& *++=[o][F]{enc} \ar@{-}[dl] \ar@{-}[dr] & \\
				*+[o][F]{y} & & *+[o][F]{z} \\
				& t_3 & 
			}
		\end{tabular}
		
		\medskip        
		
		Since $t_3 \approx \enc(z, y)$, we have $\mal(\enc(x, y), z) \succcurlyeq_{gemb}  \enc(z, y)$. For the first rule, $\dec(\enc(x, y), y) \succcurlyeq_{gemb} x$. Thus, $R_{mal}$ is a graph-embedded TRS. 
	\end{exa}
	
	\begin{exa}
	  The theory of blind signatures from Example~\ref{example:blind-sig} is also a graph-embedded TRS. All the rules except the final one are subterm. 
          For the final rule, 
		\[\unblind(\sign(\blind(x, y), z), y) \rightarrow_{R_{gemb}} \sign(\blind(x, y), z)\] 
		via the $R_{gemb}$ rule (1). Then, 
		\[\sign(\blind(x, y), z) \rightarrow_{R_{gemb}} \sign(x, y, z)\] 
		via the $R_{gemb}$ rule (4). Notice again that this intermediate term is not well formed. Finally 
		\[\sign(x, y, z) \rightarrow_{R_{gemb}} \sign(x, z) \approx \sign(x,z)\]
		via $R_{gemb}$ rule (2). 
	\end{exa}
	
	\begin{exa}[Addition]\label{example:add}
		Consider the theory of Addition, $R_{add}$, from~\cite{DBLP:journals/tcs/AbadiC06}:
		\begin{align*}
			\plus(x, s(y)) & \rightarrow \plus(s(x), y) \\
			\plus(x, 0) & \rightarrow x \\
			\pred(s(x)) & \rightarrow x
		\end{align*}
		$R_{add}$ is a graph-embedded TRS. Notice that $\plus(x, s(y)) \approx \plus(s(x), y)$.
	\end{exa}

	\begin{exa}[Prefix with Pairing]\label{example:prefix-pairing}
		The theory of prefix with pairing~\cite{cortier2006survey, Dreier2017} is a graph-embedded TRS:
		\begin{align*}
			\dec(\enc(x,y), y) &\rightarrow x \\
			\prefix(\enc(<x, y>, z)) & \rightarrow \enc(x, z) \\
			\fst(<x, y>) & \rightarrow x \\
			\snd(<x, y>) & \rightarrow y
		\end{align*} 
	\end{exa}

	\begin{exa}[Trap-door Commitment]\label{example:trap-door-commit}
		The theory of trap-door commitment~\cite{DBLP:journals/jar/CiobacaDK12} is a graph-embedded TRS:
		\begin{align*}
			\open(\td(x, y, z), y) &\rightarrow x \\
			\open(\td(x_1,y,z), f(x_1, y, z, x_2)) &\rightarrow x_2 \\
			\td(x_2, f(x_1, y, z, x_2), z) &\rightarrow \td(x_1, y, z)\\
			f(x_2, f(x_1, y, z, x_2), z, x_3) &\rightarrow f(x_1, y, z, x_3) \\
		\end{align*}

	\end{exa}

        \begin{exa}
         \label{example:strong-conf}
                  A subterm TRS $R$ is graph-embedded since for any rule $l \rightarrow r \in R$, $r$ is either a constant or a subterm of $l$ obtained  from $l$ by the repeated application at the root position of  $R_{gemb}$ rule (1). For example, the following TRS specifying a form of strong secrecy~\cite{blanchet2004automatic, DBLP:journals/tocl/ChadhaCCK16} is subterm convergent and so graph-embedded:
		\begin{align*}
	 		\fst(<x,y>) &\rightarrow x \\
	 		\snd(<x,y>) &\rightarrow y \\	
	 		\adec(\aenc(x, \pk(y)), y) &\rightarrow x \\
	 		\dec(\enc(x, y), y) &\rightarrow x \\
	 		\chck(\sign(x,y), \pk(y)) &\rightarrow \ok\\
	 		\msg(\sign(x,y)) &\rightarrow x \\
	 	\end{align*}                  
        \end{exa}

	\subsection{Some Properties of Graph-Embedded Systems}
	As an initial step we explore some of the basic properties of the graph-embedded TRSs. Similar to the class of subterm TRSs, the graph-embedded TRSs have several nice properties such as termination. We can first note that the $\succcurlyeq_{gemb}$ relation is a partial order on the class of terms. This follows from Lemma~\ref{lemma:contained-graph-emb} and the fact that the graph-embedded relation is a partial ordering on the class of finite graphs (See Proposition 1.7.3 from~\cite{graphtheorythrded}).
	
        Graph-embedded systems have the nice property of being size reducing when rewrite steps are applied and thus terminating. 

	\begin{lem}\label{lemma:graph-emb-reduce}
	  Let $R$ be a graph-embedded TRS.
For any terms $t,t'$ such that $t \rightarrow_{R} t'$, we have that  
for any function symbol or variable $\mathit{fv}$, $|t|_{\mathit{fv}} \geq |t'|_{\mathit{fv}}$. 
If for any $l \rightarrow r \in R$, $l \rightarrow_{R_{gemb}} ^{+} \cdot \approx r$, then $t \rightarrow_{R} t'$ implies $|t| > |t'|$, and so $R$ is terminating.
        \end{lem}  

        \begin{proof}
        It is easy to check that no rule from Definition~\ref{def:trs-graph-emb} introduces additional function symbol or variable occurrence. All rules from Definition~\ref{def:trs-graph-emb} remove function symbols except the second rule and $\approx$. Notice that if rule (2) is applied then one of the other rules must also be applied to ensure the final term is well formed. Finally, if we require that at least one rewrite step is applied, then the size of the term is strictly reduced even if $\approx$ does not reduce the size of the term. Thus, it corresponds to a strictly decreasing measure, showing that $R$ is terminating.
	\end{proof}

	Let us compare the definitions of graph embedded and homeomorphic embedding.

        \begin{lem}
          The class of homeomorphic-embedded TRSs is strictly included in the class of graph-embedded TRSs.
        \end{lem}

        \begin{proof}
        Applying only $R_{gemb}$ rule (1), we see that homeomorphic-embedded TRSs are a subset of graph-embedded TRSs. To show that this is a strict subset,    
        consider Malleable Encryption, $R_{mal}$, from Example~\ref{example:mal-encrypt}. 
	$R_{mal}$ is a graph-embedded TRS, as is shown in Example~\ref{example:mal-encrypt}. However, 
	$R_{mal}$ is not a homeomorphic-embedded TRS. This can be seen in the rule 
	$\mal(\enc(x,y),z) \rightarrow \enc(z,y)$. There is no way to obtain the term $\enc(z,y)$ from the term $\mal(\enc(x,y),z)$ by applying only $R_{gemb}$ rule (1).
        \end{proof}

In the rest of the paper we look at how graph-embedded TRSs can be
used to both extend results in security protocols and also give a
formal syntactic definition to classes of protocol presentations for
which the decidability of the two knowledge problems are already
known but are not contained in the class of strictly subterm convergent theories.  We focus on theories with the local stability property as
introduced in~\cite{DBLP:journals/tcs/AbadiC06} (and extended
in~\cite{DBLP:journals/tcs/Ayala-RinconFN17}).  For this purpose, we
need to consider a restricted form of graph-embedded system called
contracting system introduced in
Definition~\ref{def:cap-contracted}. One can show that without such a
restriction, the knowledge problems for graph-embedded TRSs are
undecidable in general.
	
	\section{Undecidable Knowledge Problems}
        \label{sec:undecidable-know}
	It is shown in~\cite{DBLP:journals/tcs/AbadiC06} that the knowledge problems are undecidable in general. For graph-embedded systems we construct a proof that uses a reduction from the Modified Post Correspondence Problem (MPCP), a well known undecidable problem. This proof is similar to two other undecidability proofs developed in~\cite{Anantharaman2012JAR}, and in the research report~\cite{Anantharaman2007HAL} of the paper~\cite{DBLP:conf/rta/AnantharamanNR07}. 
	
	Let $\Gamma = \{a, b\}$ be the alphabet of the $MPCP$ problem. Then, an instance of the problem is a finite set of string pairs, 
	$S = \{(\alpha_i, \beta_i) ~|~ i \in [1, n], \alpha_i \in \Gamma^{+}, \beta_i \in \Gamma^{+} \}$, and two input strings $\alpha_0 \in \Gamma^{*}$ , and $\beta_0 \in \Gamma^{*}$. A solution is a sequence of indexes 
	$i_1, \ldots i_k \in [1,n]$ such that 
	$\alpha_{i_1} \alpha_{i_2} \ldots \alpha_{i_k} = \beta_{i_1} \beta_{i_2} \ldots \beta_{i_k}$, $\alpha_0$ is a suffix of $\alpha_{i_1} \alpha_{i_2} \ldots \alpha_{i_k}$, and $\beta_0$ is a suffix of $\beta_{i_1} \beta_{i_2} \ldots \beta_{i_k}$. That is, there are string $\alpha', \beta' \in \Gamma^*$ s.t.
	$\alpha_{i_1} \alpha_{i_2} \ldots \alpha_{i_k} = \alpha' \cdot \alpha_0 = \beta' \cdot \beta_0 = \beta_{i_1} \beta_{i_2} \ldots \beta_{i_k}$. Notice that the standard $PCP$ is easily reducible to this $MPCP$ by setting
	$\alpha_0$ and $\beta_0$ to the empty string. 
	
	\begin{lem}\label{lemma:graph-embedded-undecidable}
		The deduction problem is undecidable for the class of homeomorphic-embedded convergent TRSs. 
	\end{lem}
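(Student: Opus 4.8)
The plan is to encode an instance of the Modified Post Correspondence Problem as a homeomorphic-embedded convergent TRS together with a frame, so that a recipe deducing a fixed target term exists if and only if the MPCP instance has a solution. I would follow the pattern already used in~\cite{Anantharaman2012JAR, Anantharaman2007HAL}, which is cited immediately before the statement, but I must be careful with the correction mentioned in the introduction — namely, the earlier version allowed trivial solutions, so the encoding has to forbid the empty sequence and forbid spurious recipes that short-circuit the string-building process.

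Concretely, I would represent words over $\Gamma=\{a,b\}$ as nested unary terms, using unary symbols $a(\cdot)$ and $b(\cdot)$ applied to a base constant; so a word $w = c_1 c_2 \cdots c_m$ becomes the term $c_1(c_2(\cdots c_m(\cdot)))$ read as a context. The key device is a binary (or ternary) symbol, say $\mathit{pair}$, holding the two partial concatenations built so far, and for each string pair $(\alpha_i,\beta_i)$ a rewrite rule of the shape
\[
g_i(\mathit{pair}(x, y)) \rightarrow \mathit{pair}(\alpha_i(x), \beta_i(y)),
\]
where $\alpha_i(\cdot)$ and $\beta_i(\cdot)$ abbreviate the appropriate nested unary contexts. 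Each such rule is homeomorphic-embedded: the right-hand side is obtained from the left by the projection schema $f(x_1,\dots,x_n)\to x_i$ applied to strip $g_i$ and then... wait — that is exactly the issue, since $\mathit{pair}(\alpha_i(x),\beta_i(y))$ is \emph{not} a subterm of $g_i(\mathit{pair}(x,y))$. So instead I would orient the encoding the other way: put the \emph{fully assembled} solution on the left and let the rules \emph{consume} structure, e.g.
\[
h(\mathit{pair}(\alpha_i(x), \beta_i(y))) \rightarrow h(\mathit{pair}(x, y)),
\]
together with a terminating rule $h(\mathit{pair}(\bot,\bot)) \rightarrow \mathit{ok}$ where $\bot$ marks the bottom of both stacks; now each rule's right-hand side \emph{is} homeomorphically embedded in its left-hand side (delete the symbols of $\alpha_i$ and $\beta_i$ via the projection/collapse steps), and convergence can be arranged by checking there are no critical pairs (the $\alpha_i$-prefixes may overlap, so I may need to linearize by indexing which pair was used, or argue joinability directly). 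The frame then publishes the names/contexts the attacker needs to \emph{build} a candidate assembled term $h(\mathit{pair}(u,u))$ with $u$ forced to be a common concatenation, and the target to deduce is $\mathit{ok}$; the $\mathit{pair}$ wrapper and the shared variable $u$ on both coordinates enforce that the $\alpha$- and $\beta$-concatenations coincide, while the $\alpha_0,\beta_0$ suffix conditions are baked into $\bot$ by replacing it with the contexts $\alpha_0(\cdot)$ and $\beta_0(\cdot)$ over a genuinely bottom constant.

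The three things I would verify in order are: (i) the constructed TRS is convergent — termination is immediate since every rule strictly decreases term size, and confluence requires a finite critical-pair check that I expect to go through once the rules are indexed to prevent prefix overlaps; (ii) the TRS is homeomorphic-embedded — this is the routine per-rule check that each right-hand side is reachable from its left-hand side by the collapse rules $R_{emb}$; (iii) the reduction is correct, i.e.\ $\phi \vdash t$ iff the MPCP instance has a solution, which splits into a completeness direction (a solution sequence $i_1,\dots,i_k$ gives a recipe by composing the published contexts) and a soundness direction (any recipe normalizes to $\mathit{ok}$ only by faithfully peeling matched $\alpha_i/\beta_i$ blocks, so it reads off a solution). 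The main obstacle is direction (iii)-soundness combined with the ``no trivial solutions'' fix: I must ensure the attacker cannot fabricate $\mathit{ok}$ by some unintended recipe — e.g.\ by never wrapping with $h$, or by using the published contexts asymmetrically on the two coordinates, or by exploiting an unforeseen rewrite at a non-root position. Ruling this out is exactly what went wrong in~\cite{DBLP:conf/fscd/SatterfieldEMR23}, so I would keep the signature minimal, make $h$ and $\mathit{pair}$ non-constructor in a way that every normal form containing $\mathit{ok}$ has a unique derivation history, and prove a normal-form lemma characterizing all $R$-normal forms of terms built from the frame. Finally, since deduction reduces to deduction for homeomorphic-embedded convergent TRSs and these are a subclass of graph-embedded convergent TRSs (as shown in the ``Comparing Definitions'' paragraph), the undecidability immediately transfers upward to graph-embedded convergent systems, giving the separation with the contracting subclass promised in the introduction.
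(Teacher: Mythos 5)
Your overall strategy matches the paper's: reduce from MPCP, encode words as nested unary terms, orient the rules so that they \emph{consume} matched $\alpha_i/\beta_i$ blocks (so each rule is homeomorphically embedded), and use restricted names so the target cannot be written down directly. However, the one idea that actually makes the reduction sound is missing from your proposal, and you essentially admit this when you flag ``using the published contexts asymmetrically on the two coordinates'' as the main obstacle without resolving it. With rules of the shape $h(\mathit{pair}(\alpha_i(x),\beta_i(y)))\rightarrow h(\mathit{pair}(x,y))$ and a frame that lets the attacker assemble arbitrary public terms, nothing prevents the recipe $h(\mathit{pair}(v,w))$ with $v=\widetilde{\alpha_{i_1}\cdots\alpha_{i_k}}(\bot)$ and $w=\widetilde{\beta_{i_1}\cdots\beta_{i_k}}(\bot)$ for an \emph{arbitrary} index sequence: the rules peel the blocks in lockstep and reach $\mathit{ok}$ whether or not the two concatenations are equal. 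Saying the attacker builds ``$h(\mathit{pair}(u,u))$ with $u$ forced to be a common concatenation'' begs the question --- frames cannot force a recipe to be of that shape, and ``unique derivation history'' normal-form lemmas do not supply the missing constraint. The paper's fix is a \emph{non-linear} gating rule, $f(x,y,x,\mathit{locked}(\mathit{unlocked}(z)))\rightarrow f(x,y,x,\mathit{unlocked}(z))$: the repeated variable $x$ performs the syntactic equality check between the two assembled strings, and because $\mathit{unlocked}(e)$ contains the restricted name $e$ (published only inside $\mathit{locked}(\mathit{unlocked}(e))$), the block rules --- which require $\mathit{unlocked}(z)$ in their fourth argument --- cannot fire until this equality test has succeeded. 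Without this device, or an equivalent one, your reduction decides a trivially true question.

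Two smaller points. First, your terminating rule $h(\mathit{pair}(\bot,\bot))\rightarrow\mathit{ok}$ is not homeomorphic-embedded under Definition~\ref{def:hom-emd}, since a fresh constant $\mathit{ok}$ is not embedded in the left-hand side (the constant exception appears only in the graph-embedded definition); the paper avoids this by taking the target to be the normal form $f(c,d,c,\mathit{unlocked}(e))$ itself, which contains restricted names and hence is not directly writable as a recipe. Second, your use of indexing symbols to kill critical pairs between overlapping $\alpha_i$-prefixes is exactly the role of the $g_i$ symbols in the paper's block rules (and they additionally let one read the solution off the recipe), so that part of your plan is sound.
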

	\begin{proof}
		Consider the $MPCP$ defined by $\alpha_0, ~\beta_0$, and the finite set of string pairs $\{(\alpha_i, \beta_i) | ~i \in [1, n]\}$ over the alphabet $\Gamma = \{a, b\}$. Assume unary function symbols $locked$, $unlocked$, $a$, $b$, $g_1,\dots,g_n$, a quaternary function $f$, and constants $c$, $d$, $e$. 
		
		Each string from the $MPCP$ can be viewed as a sequence of applications of the unary function symbols. We can convert a string over $\Gamma$ into a term using function symbols $a$ and $b$ as follows. For an alphabet symbol $\widetilde{\gamma} \in \Gamma$, represent by $\gamma$ the corresponding function symbol. That is, if $\widetilde{\gamma} = a$, then $\gamma = a()$. Likewise,  if $\widetilde{\gamma} = b$, then $\gamma = b()$. Now for each pair of strings, $(\alpha_i, \beta_i)$, we can recursively construct the function interpretation of each string, $\widetilde{\alpha}$, as follows: $\widetilde{\alpha_i}(x) = \widetilde{\gamma \alpha_i'}(x) = \gamma(\widetilde{\alpha_i'}(x))$, and $\widetilde{\beta_i}(x) = \widetilde{\gamma \beta_i'}(x) = \gamma(\widetilde{\beta_i'}(x))$.

		Let $R = B \cup U$ where 
			\begin{align*}
					B & = \bigcup_{i=1}^n \{ f(\widetilde{\alpha_i}(x), g_i(y), \widetilde{\beta_i}(z), unlocked(k)) \rightarrow f(x, y, z, unlocked(k)) \}, \\ 
					U & = \{ f(x, y, x, locked(unlocked(z))) \rightarrow f(x, y, x, unlocked(z)) \}.
			\end{align*}
                        The rules in $B$ are called the \emph{block rules}, while the rule in $U$ is called the \emph{unlock} one.
			The function symbols $g_1,\dots,g_n$ (and the $locked$ in the final rule) ensure there are no critical pairs between rules and thus we have a convergent TRS.

		Now construct the frame $\phi = \nu \tilde{n}.\sigma$ where $\tilde{n} = \{c, e\}$ and $\sigma = \{ x \mapsto \widetilde{\alpha_0}(c), y \mapsto \widetilde{\beta_0}(c), z \mapsto locked(unlocked(e))\}$, and let the target ground term to be deduced be $f(c,d,c,unlocked(e))$.

		Let us show that there exists a recipe $\zeta$ such that $\zeta\sigma \rightarrow_{R}^{!} f(c, d, c, unlocked(e)) $  iff there is a solution to the $MPCP$.
		\begin{itemize}
			\item Let $\alpha_{i_1} \alpha_{i_2} \ldots \alpha_{i_k} = \beta_{i_1} \beta_{i_2} \ldots \beta_{i_k}$, be a solution to the $MPCP$ problem.
			Let $\alpha' \cdot \alpha_0 = \alpha_{i_1} \alpha_{i_2} \ldots \alpha_{i_k}$ and $\beta' \cdot \beta_0 = \beta_{i_1} \beta_{i_2} \ldots \beta_{i_k}$. Then, $\widetilde{\alpha'}(x)\sigma = \widetilde{\beta'}(y)\sigma$. Now, for each block, $i$ in the $MPCP$, there exists a single block rule, containing $g_i$, that removes the $\alpha_i$ and $\beta_i$.
			Therefore, $\zeta = f(\widetilde{\alpha'}(x), g_{i_1}(\ldots (g_{i_k}(d))), \widetilde{\beta'}(y), z)$ is a solution recipe since:
			\begin{align*}
				\zeta \sigma  &=   f(\widetilde{\alpha'(\alpha_0(c))}, g_{i_1}(\ldots (g_{i_k}(d))), \widetilde{\beta'(\beta_0(c))}, locked(unlocked(e)))  \rightarrow_R^{unlock} \\
				& f(\widetilde{\alpha'(\alpha_0(c))}, g_{i_1}(\ldots (g_{i_k}(d))), \widetilde{\beta'(\beta_0(c))}, unlocked(e)) \rightarrow^{!}_{R} f(c, d, c, unlocked(e))
			\end{align*}
			
			\item Let $\zeta$ be a recipe term such that  $\zeta \sigma \rightarrow^{!}_{R} f(c,d,c, unlocked(e))$ then there is a solution to the $MPCP$ and it can be extracted from the 
			indexes of the $g_i$ function symbol in the term $\zeta \sigma$. This is due to the fact that the two strings must be the same, otherwise the unlock rule in $U$ cannot be applied. Notice also that this rule must be applied before any of the remaining rules since it ``unlocks'' the function $unlocked$ which is required for the other block rules. Next since each of the rules represents a string pair in the $MPCP$ problem, if $\zeta \sigma$ reduces to the target $f(c, d, c, unlocked(e))$, then the strings in $\zeta \sigma$ are composed exactly of $MPCP$ string pairs.
			Finally, the recipe cannot just be $f(c,d,c, unlocked(e))$ since 
			$c, e \in \tilde{n}$. In addition, the recipe cannot contain $unlocked(e)$ before applying the substitution $\sigma$, since $e \in \tilde{n}$.  \qedhere
		\end{itemize} 
	\end{proof}
	
	\begin{exa}\label{example:pcp}
		Consider the following $MPCP$:
		\[ \overbrace{\left( \frac{ba}{baa} \right) }^\text{\text{pair 1}}
		,~\overbrace{ \left( \frac{ab}{ba} \right) }^\text{\text{pair 2}}
		,~\overbrace{ \left( \frac{\vphantom{b}aaa}{aa} \right) }^\text{\text{pair 3}} \]
		Assume the two starting string are: $\alpha_0 = aa$ and $\beta_0 = a$.
		
		Following the construction of Lemma~\ref{lemma:graph-embedded-undecidable}:
		\[R = \left\{
		\begin{array}{ll}
			f(b(a(x)), g_1(y), b(a(a(z))), unlocked(k)) & \rightarrow f(x, y, z, unlocked(k)) \\ 
			f(a(b(x)), g_2(y), b(a(z)), unlocked(k)) & \rightarrow f(x, y, z, unlocked(k)) \\ 
			f(a(a(a(x))), g_3(y), a(a(z)), unlocked(k)) & \rightarrow f(x, y, z, unlocked(k)) \\
			f(x, y, x, locked(unlocked(z))) & \rightarrow f(x, y, x, unlocked(z)) 
		\end{array} 
		\right\} \]
		In addition, $\widetilde{\alpha_0}(c) = a(a(c))$ and $\widetilde{\beta_0}(c) = a(c)$.
		Notice there is a solution with indexes 1 and 3. That is, 
		$baa \cdot \alpha_0 = \alpha_1 \cdot \alpha_3  =  baaa \cdot \beta_0 = \beta_1 \cdot \beta_3 $.
		
		\noindent
		The starting frame is: $\phi = \nu \{c, e\}.\{  
		x \mapsto a(a(c)), ~y\mapsto  a(c), z \mapsto locked(unlocked(e)) \}$.
		Then, there is a solution recipe: 
		$\zeta = f(b(a(a(x))), g_1(g_3(d)), b(a(a(a(y)))), z)$, since $\zeta \sigma \rightarrow^{!}_R f(c, d, c, unlocked(e))$.
	\end{exa}

	\noindent
	As a corollary of Lemma~\ref{lemma:graph-embedded-undecidable} we obtain the following.
	
	\begin{cor}\label{cor:undec-deduction}
		The deduction problem is undecidable for the class of graph-embedded convergent TRSs.
	\end{cor}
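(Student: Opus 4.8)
The plan is to deduce the statement immediately from Lemma~\ref{lemma:graph-embedded-undecidable} by observing that the class of homeomorphic-embedded convergent TRSs is contained in the class of graph-embedded convergent TRSs; since the deduction problem is already undecidable on the smaller class, it is undecidable on the larger one. So there is no new reduction to build: the one from MPCP in the proof of Lemma~\ref{lemma:graph-embedded-undecidable} is reused verbatim.

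First I would make the class inclusion precise. Recall from Definition~\ref{def:hom-emd} that $\trianglerighteq_{emb}$ is the relation $\rightarrow^*_{R_{emb}}$ induced by $R_{emb} = \{ f(x_1,\dots,x_n) \rightarrow x_i \}$, and note that each such projection rule is exactly an instance of rule schema~(1) in Definition~\ref{def:trs-graph-emb}, so $R_{emb} \subseteq R_{gemb}$. Consequently, any derivation $l \rightarrow^*_{R_{emb}} r$ is also a derivation $l \rightarrow^*_{R_{gemb}} r$; since $r$ is itself a well-formed term and $\approx$ is reflexive, this witnesses $l \succcurlyeq_{gemb} r$. Hence $\trianglerighteq_{emb} \,\subseteq\, \succcurlyeq_{gemb}$, and therefore every homeomorphic-embedded TRS is graph-embedded; in particular, every homeomorphic-embedded convergent TRS is a graph-embedded convergent TRS. (This is the containment already noted in the discussion comparing the two definitions.)

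Next I would apply this to the construction in Lemma~\ref{lemma:graph-embedded-undecidable}. The TRS $R = B \cup U$ built there is convergent, and it is homeomorphic-embedded: in each block rule the right-hand side $f(x,y,z,\unlocked(z))$ embeds argument-by-argument into the left-hand side, since each $\widetilde{\alpha_i}(x)$ (resp. $\widetilde{\beta_i}(z)$) is the variable $x$ (resp. $z$) wrapped in unary symbols, $g_i(y) \trianglerighteq_{emb} y$, and the last argument is unchanged; and the unlock rule only requires $locked(\unlocked(z)) \trianglerighteq_{emb} \unlocked(z)$. By the previous paragraph $R$ is thus a graph-embedded convergent TRS, so the reduction from MPCP in Lemma~\ref{lemma:graph-embedded-undecidable} is in fact a reduction to the deduction problem restricted to graph-embedded convergent TRSs. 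I would then conclude: a decision procedure for $\phi \vdash_R t$ uniform over all graph-embedded convergent TRSs $R$ would, in particular, decide deducibility over its homeomorphic-embedded convergent sub-instances, hence decide MPCP, a contradiction.

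I do not expect any real obstacle here; the entire content is the class inclusion $\trianglerighteq_{emb} \,\subseteq\, \succcurlyeq_{gemb}$. The only point meriting a line of care is that the extra $\approx$-step and the well-formedness side condition in Definition~\ref{def:trs-graph-emb} do not shrink $\succcurlyeq_{gemb}$ below $\trianglerighteq_{emb}$ — which holds because $\approx$ is reflexive and rule schema~(1) of $R_{gemb}$ already subsumes all of $R_{emb}$.
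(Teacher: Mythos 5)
Your proposal is correct and matches the paper's own (implicit) argument: the paper derives the corollary directly from Lemma~\ref{lemma:graph-embedded-undecidable} via the containment of homeomorphic-embedded TRSs in graph-embedded TRSs, which it records in the ``Comparing Definitions'' discussion exactly as you argue (rule schema~(1) of $R_{gemb}$ subsumes $R_{emb}$ and $\approx$ is reflexive). You simply make explicit the inclusion step that the paper leaves as an observation, so there is nothing to add.
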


	\begin{rem}
		Note that the knowledge problems of deduction and static-equivalence were already proven undecidable in general in~\cite{DBLP:journals/tcs/AbadiC06}, where a reduction from PCP is also used. However, the system used in the proof from~\cite{DBLP:journals/tcs/AbadiC06} is not graph-embedded and it is not clear how to directly adapt that proof to the graph-embedded case of Lemma~\ref{lemma:graph-embedded-undecidable}.
		
		Finally, let us note that the above proof fixes an error with the proof of the same result presented in~\cite{DBLP:conf/fscd/SatterfieldEMR23}. There, the system would erroneously allow trivial solutions since it did not include an unlock type rule that forces the two string to be equal. 
	\end{rem}
	
	\section{Decidable Knowledge Problems}
	\label{sec:decidable-know}
        
	We consider below a restricted form of the graph-embedded TRS for which we
	can show decidability of the knowledge problems. The key to the definition is to start with a
	graph-embedded system but then to require a set of ``projection rules''. These rules, in a way, 
	provide access to subterms which may be modified during some rewrite derivation. Access to these subterms
	is important for the success of the saturation based decision procedure \cite{DBLP:journals/tcs/AbadiC06}
	for showing decidability of the knowledge problems (see Definition~\ref{def:sat} below). Having access to subterms is a critical property in a number of definitions for which the knowledge problems are considered. It is a key feature in subterm convergent TRSs. It is also a feature in $\Delta$-strong theories, defined in~\cite{DBLP:conf/rta/AnantharamanNR07}, where the cap problem, a form of the deduction problem, is studied for such theories (see Section~\ref{sec:relations-cap}). We can now introduce the type of projection rules we will require. 

        \begin{defi}[Projecting Rule]\label{def:projecting}
        Let $R$ be a TRS, $x$ a variable and $t$ a term with a single occurrence of $x$. A {\em projecting rule in $R$ over $t$ leading to $x$} is a rule in $R$ which is a variant of $t' \rightarrow x$ where $t'$ is a superterm of $t$ with no additional occurrences of $x$.
        \end{defi} 
        
        \begin{rem}
        	The requirement that $x$ only occurs once in the left-hand side allows us to rule out projecting rules which do not help in producing new knowledge during the saturation procedure (Definition~\ref{def:sat}). For example, a rule of the form $f(x, g(x)) \rightarrow x$ requires that you already have knowledge of the subterm of $g()$ in order to derive the subterm. However, this restriction does not require that the entire term be linear. For example, $\dec(\enc(x, y), y) \rightarrow x$ is a valid projecting rule since you only need to know $y$ to unlock the $enc()$ term and obtain the subterm instantiating $x$.
        \end{rem}
        
        How these projecting rules relate to the entire term-rewrite system can now be defined. We consider the relation to a rewrite derivation, using the graph-embedded rules, and then the permutation relation. 
        
        \begin{defi}[Projection-Closed Derivation]\label{def:projection-closed-deriv}
          Let $R$ be a graph-embedded TRS.
          A non-empty $R_{gemb}$-derivation is said to be {\em projection-closed with respect to $R$}   if it has the form
          $$s \rightarrow_{R_{gemb}}^{l\rightarrow r,\gamma} s' \rightarrow_{R_{gemb}}^* t$$
          where
          \begin{itemize}
          \item $s$ is a linear term and $t$ is a well formed term,  
          \item $l \rightarrow r$ is a $R_{gemb}$ rule among (1), (2) and (4),
          \item $\gamma$ is a substitution ranging over variables,
          \item if $l\rightarrow r$ is rule (1), $f(x_1, \ldots, x_n) \rightarrow x_i$, then for any $x_i\gamma$ occurring in $t$ there exists a projecting rule in $R$ over $f(x_1, \ldots, x_n)\gamma$ leading to $x_i\gamma$,
          \item if $l\rightarrow r$ is rule (4), $f(x_1, \ldots, x_{i-1}, g(\bar{z}), x_{i+1}, ~\ldots, x_m) \rightarrow f(x_1, \ldots, ~x_{i-1}, \bar{z}, x_{i+1}, ~\ldots, x_m)$, then for any $z_i\gamma$ occurring in $t$ there  exists a projecting rule in $R$ over $g(\bar{z})\gamma$ leading to $z_i\gamma$,
          \item $s' \rightarrow_{R_{gemb}}^* t$ is either projection-closed with respect to $R$ or empty.
          \end{itemize}
          \end{defi}

          \begin{defi}[Projection-Closed Permutative Equality]\label{def:projection-closed-permut-eq}
            Let $R$ be a graph-embedded TRS and $l = r$ a permutative equality obtained by one or zero applications of $\approx_s$ followed by one or zero applications of $\approx_l$. The equality $l = r$ is said to be {\em projection-closed with respect to $R$} if the following holds: for any variable $x$ occurring in a strict subterm $l' \neq x$ of $l$ and in a strict subterm $r'$ of $r$ such that $l' \neq r'$, then there exists a projecting rule in $R$ over $l'$ leading to $x$.
        \end{defi}

        Based on Definitions~\ref{def:projection-closed-deriv} and~\ref{def:projection-closed-permut-eq},  we can now introduce the subset of graph-embedded rules for which we can obtain decidability of the knowledge problems.

          \begin{defi}[Contracting TRS]\label{def:cap-contracted}
            
          A graph-embedded TRS $R$ is said to be {\em contracting} if for any non-subterm rule $l\rightarrow r$ in $R$ one of the following holds:
        \begin{itemize}  
        \item there exist a position $p$ of $l$, a substitution $\sigma$ ranging over variables, a projection-closed derivation $g \rightarrow_{R_{gemb}}^+ d$ with respect to $R$ such that $l{|_p} = g\sigma$ and $r=d\sigma$ is of depth $1$,
        \item $l = r$ is a permutative equality such that $depth(l)=depth(r)=2$ and $l=r$ is projection-closed with respect to $R$.
        \end{itemize}  

        A contracting TRS, $R$, is \emph{strictly contracting} if for any $l \rightarrow r \in R$, $depth(l)>depth(r)$.

        \end{defi}

    \begin{rem}
     Note that the projection-closed derivation in Definition~\ref{def:cap-contracted} starts at some subterm of $l$, say $l|_p$. To get this subterm $l|_p$ from $l$ thanks to a rewriting derivation with respect to $R_{gemb}$, it suffices to apply repeatedly $R_{gemb}$ rule (1) at the root until it reaches $l|_p$. These root applications of rule (1) do not necessarily require projecting rules, as the non-root applications of rule (1) will in the projection-closed derivation. See Example~\ref{example:cap-contracting} below for an illustration of this fact.
    	
     It is also worth pointing out that the requirement on $g$ being a linear term (in Definition~\ref{def:cap-contracted}) is a way to identify individual instances of variables, which is needed to identifying the variable for which a projection rule is required (in Definition~\ref{def:projection-closed-deriv}). However, the requirement that $g$ be linear is not a restriction requiring that $l$ (in Definition~\ref{def:cap-contracted}) be a linear term. Notice that 
     $\sigma$ is a substitution ranging over variables and used to match $l|_p$ with $g$. Thus, even if $g$ is linear, it does not require that $l$ is linear.  
    \end{rem}
             	
	Although the definition restricts the set of graph-embedded systems it is
	still sufficient to model many security protocols of interest. We include several such examples below.

	\begin{exa}\label{example:cap-contracting}
		Consider the theory of blind signatures from Example~\ref{example:blind-sig}. 
		This theory is contracting. Let us look at the rule: 
		$\unblind(\sign(\blind(x, y), z), y) \rightarrow \sign(x, z)$ and see how it satisfies the
		above definitions for contracting convergent systems by identifying the subterm $l|_{p}$, and the projection-closed derivation. Let us construct this derivation step by step.
		\begin{itemize}
			\item Let $l = \unblind(\sign(\blind(x, y), z), y)$. We start the projection-closed derivation at the subterm $\sign(\blind(x, y), z)$, that is $l|_{1} = \sign(\blind(x, y), z)$. 
			\item There exists a projection-closed derivation starting from $l|_{1}$, considering the substitution $\gamma = \{ x_1 \mapsto x, ~x_2 \mapsto y, ~x_3 \mapsto z \}$:
                          $$\sign(\blind(x, y), z) \rightarrow_{R_{gemb}}^{\textit{rule (4)},\gamma} \sign(x, y, z) \rightarrow_{R_{gemb}}^{\textit{rule (2)},\gamma} \sign(x, z)$$
where $\sign(x, z)$ is of depth $1$. 
		\end{itemize}		
In the above derivation, the first step $\sign(\blind(x, y), z)  \rightarrow_{R_{gemb}}^{\textit{rule (4)},\gamma} \sign(x, y, z)$ requires a projecting rule. There are two variables, $x$ and $y$, under the $blind()$ symbol. However, notice that $y$ is removed in the next step. Therefore, we need only one projecting rule, $t' \rightarrow x$, which is the rule $\unblind(\blind(x, y), y) \rightarrow x$. Following Definition~\ref{def:projecting}, $t' = \unblind(\blind(x, y), y)$ is a superterm of $t = \blind(x, y)$ where $x$ only occurs once in $t'$.

		Notice that the remaining rules of the blind signatures theory are subterm. Therefore, blind signatures is a contracting convergent TRS.
	\end{exa}
	
		\begin{exa}\label{example:cap-contracting2}
		Consider several additional convergent TRSs given in previous examples:
		\begin{itemize}
			\item The theory of addition, introduced in Example~\ref{example:add}, is a contracting TRS and provides a good example of the use of permutative equality in Definition~\ref{def:cap-contracted}.
			Consider the rule $\plus(x, s(y)) \rightarrow \plus(s(x), y) \in R_{add}$.  
			Notice that $\plus(x, s(y)) \approx \plus(s(x), y)$ and that on the left-hand side $y$ 
			appears in the subterm $s(y)$ but does not appear in the same subterm on the right-hand side. Thus, there needs to be the rule $\pred(s(x))  \rightarrow x$.
			
			\item The theory of prefix with pairing from Example~\ref{example:prefix-pairing} is a contracting TRS.
			
			\item Any subterm convergent TRS is contracting. For example, the theory of pairing with encryption, 
			$R =\{ \fst(\left\langle x, y \right\rangle ) \rightarrow x, 
			\snd(\left\langle x, y \right\rangle ) \rightarrow y,
			\dec(\enc(x,y),y) \rightarrow x\}$, 
			and the theory of Example~\ref{example:strong-conf}.

		\end{itemize}
	\end{exa}

	\begin{exa}\label{example:not-cap-contracting}
		Consider the following TRSs:
		\begin{itemize}
			\item The theory of trap-door commitment of Example~\ref{example:trap-door-commit} is not a contracting TRS. Notice that it is missing the required projecting rules. Indeed, for the rule $td(x_2, f(x_1, y, z, x_2), z) \rightarrow td(x_1, y, z)$, there would need to be projecting rules for $x_1$, $y$, and $z$.
			Interestingly, this theory is also not locally stable~\cite{DBLP:journals/jar/CiobacaDK12}. However, if we add the rules, \[ \fst(f(x_1, x_2, x_3, x_4)) \rightarrow x_1,
			\snd(f(x_1, x_2, x_3, x_4)) \rightarrow x_2,
			\thd(f(x_1, x_2, x_3, x_4)) \rightarrow x_3, \] then the theory is contracting and so locally stable by Theorem~\ref{thm:locally-stable}.
			
			\item The theory of Example~\ref{example:mal-encrypt} is not contracting. Notice that for the rule $\mal(\enc(x, y), z) \rightarrow \enc(z, y)$, the
			vertex labeled with $z$ is moved under the $\enc$ vertex on the right-hand side. This violates
			the requirements of Definition~\ref{def:cap-contracted}, which does not allow the $R_{gemb}$ rule (3), which would be required in this case. Thus, even with additional projecting
			rules, it cannot be made to be contracting. This theory is also
			not locally stable, as shown in~\cite{DBLP:journals/jar/CiobacaDK12}.
			
		\end{itemize}
	\end{exa}

	\begin{rem}
		If we consider now the TRS from the undecidability proof of Lemma~\ref{lemma:graph-embedded-undecidable} we can see that while the system is graph-embedded it is not contracting. The required projecting rules do not exist.
	\end{rem}
	
	We now develop a few results and definitions we need to show the 
	decidability of the knowledge problems for contracting, graph-embedded 
	convergent systems.

	\begin{defi}[Graph-Embedded Subterms]\label{def:ex-subset}
	 Let $R$ be a contracting TRS and $\phi = \nu \tilde{n}.\sigma$ be a frame.  The set of graph-embedded subterms of a term $t$, denoted as $gst(t)$, is defined as:
		$gst(c) = \{c \}$, where $c$ is a name or a constant, $gst(t) = \{ t' | t \rightarrow_{R_{gemb}}^{*} t'' \approx t', \text{ and } t' \text{ is a well formed term }\}
		\cup \bigcup_{t'' \in st(t)} gst(t'')$. In addition, $gst(\phi) = \cup_{t \in Ran(\sigma)} gst(t)$.
	\end{defi}
	
	Notice that for any term $t$, $gst(t)$ is a finite set. This is due to the fact that when recursively constructing
	$gst(t)$ in Definition~\ref{def:ex-subset}, $t'$ is equal or smaller in size to $t$, and any term 
	$t'' \in st(t)$ must be strictly smaller than $t$. Thus, we have the following result. 
	\begin{lem}\label{lemma:gst-finite}
		For any term $t$ and any frame $\phi$, $gst(t)$ and $gst(\phi)$ are finite sets.
	\end{lem}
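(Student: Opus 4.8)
The plan is to prove finiteness of $gst(t)$ by well-founded induction on the term size $|t|$, and then obtain finiteness of $gst(\phi)$ immediately, since $Ran(\sigma)$ is finite for any frame $\phi = \nu\tilde{n}.\sigma$.

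The first step I would carry out is a direct inspection of the four rule schemata of $R_{gemb}$ (Definition~\ref{def:trs-graph-emb}): each of them is strictly size-decreasing. Rules $(1)$ and $(2)$ delete a node of the term tree, while rules $(3)$ and $(4)$ delete exactly one function node ($f$ in the $(3)$ case, $g$ in the $(4)$ case); hence $s \rightarrow_{R_{gemb}} s'$ always implies $|s'| < |s|$, even for the (possibly ill formed) intermediate terms, where $|\cdot|$ is simply the number of nodes. Two consequences follow. First, $t \rightarrow^*_{R_{gemb}} s$ implies $|s| \le |t|$ and every $R_{gemb}$-derivation out of $t$ has length at most $|t|$. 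Second, no rule of $R_{gemb}$ introduces a symbol not already present, so every reachable $s$ is built from the finitely many function symbols, constants, and variables occurring in $t$ (in particular $\FS(s) \subseteq \FS(t)$). Since $\approx_s$ and $\approx_l$, and therefore $\approx$ (Definitions~\ref{def:permutative1and2} and~\ref{def:permutative}), preserve size and introduce no new symbols, every $t'$ with $t \rightarrow^*_{R_{gemb}} t'' \approx t'$ satisfies $|t'| \le |t|$ and uses only symbols of $t$; there are only finitely many well formed terms whose size is bounded by $|t|$ and which use only the finitely many symbols occurring in $t$, so the first component $\{\,t' \mid t \rightarrow^*_{R_{gemb}} t'' \approx t',\ t' \text{ well formed}\,\}$ of $gst(t)$ in Definition~\ref{def:ex-subset} is finite.

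For the induction itself, the base case with $t$ a name or constant is trivial, since $gst(t) = \{t\}$. For $t = f(t_1,\dots,t_n)$ with $n \ge 1$, the set $st(t)$ is finite and each of its elements other than $t$ has size strictly smaller than $|t|$, so the induction hypothesis applies to each of them and $\bigcup_{t'' \in st(t)} gst(t'')$ is a finite union of finite sets; combined with the finiteness of the first component, this shows $gst(t)$ is finite. Finally, $gst(\phi) = \bigcup_{t \in Ran(\sigma)} gst(t)$ is a finite union of finite sets, because $\sigma$, being a substitution, has finite range.

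The only point that requires more than routine care is the finiteness of the first component: one must verify that $R_{gemb}$-rewriting out of $t$ cannot loop or enlarge terms (it cannot, since every schema is strictly size-decreasing) and that the $\approx$-equivalence class of a fixed term is finite (it is, since only finitely many permutations of a finite argument list, and of a finite set of leaves, are available). Everything else is a straightforward induction on term size, following the informal argument already given before the statement.
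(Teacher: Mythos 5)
Your argument is correct and takes essentially the same route as the paper, which justifies the lemma in the short paragraph preceding its statement: every $t'$ in the first component of $gst(t)$ satisfies $|t'| \le |t|$ because each $R_{gemb}$ schema is strictly size-decreasing and $\approx$ preserves size, and the recursion descends to strictly smaller subterms. You merely make explicit the details the paper leaves implicit (bounded size together with finitely many available symbols yields finitely many candidate terms, and $Ran(\sigma)$ is finite).
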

	
	Based on the extended definition of subterms, $gst$, we can now construct a saturation set for frames. Computing such a saturation set is the goal of many procedures that consider security notions such as deducibility. The saturation set represents the knowledge of the attacker and their ability to deduce a term from that knowledge, see~\cite{DBLP:journals/tcs/AbadiC06} for more background. When considering this saturation procedure, recall that for any frame, $\phi = \nu \tilde{n}.\sigma$, $\sigma$ is a ground substitution. For the following it is also useful to recall the definition of $c_{R}$ from Section~\ref{sec:prelim}.
	
	\begin{defi}[Frame Saturation for Contracting Convergent TRSs]\label{def:sat}
          Let $R$ be any contracting convergent TRS and $\phi = \nu \tilde{n}.\sigma$ any $R$-normalized frame.
          Define the set $sat(\phi)$ to be the smallest set such that $Ran(\sigma) \cup \fn(\phi) \subseteq sat(\phi)$, and closed under the following two rules:
		\begin{enumerate}
			\item if $M_1, \ldots, M_l \in sat(\phi)$ and $f(M_1, \ldots, M_l) \in gst(\phi)$, then 
			$f(M_1, \ldots, M_l) \in sat(\phi)$,
			\item if $M_1, \ldots, M_l \in sat(\phi)$, $C[M_1, \ldots, M_l] \rightarrow_{R}^{\epsilon} M$, where $C$ is a context, $|C| \leq c_R$, $\fn(C) \cap \tilde{n} = \emptyset$, and $M \in gst(\phi)$, then $M \in sat(\phi)$.
			
		\end{enumerate}
	\end{defi}

	The finiteness of $sat(\phi)$ is critical to computing the possible attackers knowledge thus having a finite set is useful for any practical procedure for deciding deducibility. 
	
	\begin{lem}\label{lemma:sat-finite}
          Let $R$ be any contracting convergent TRS. 
          For any $R$-normalized frame $\phi$, $sat(\phi)$ is a finite set of terms deducible from $\phi$ modulo $R$.
	\end{lem}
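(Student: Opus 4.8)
The plan is to prove the stronger containment $sat(\phi) \subseteq gst(\phi)$ and then simply invoke Lemma~\ref{lemma:gst-finite}, which already isolates the finiteness of $gst(\phi)$. So the whole argument reduces to two easy verifications about how $sat(\phi)$ (Definition~\ref{def:sat}) sits inside $gst(\phi)$ (Definition~\ref{def:ex-subset}).

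First I would check that the base of the inductive definition of $sat(\phi)$ is contained in $gst(\phi)$. For every $t \in Ran(\sigma)$ we have $t \in gst(t)$, taking the empty $R_{gemb}$-derivation and $t'' = t' = t$ in Definition~\ref{def:ex-subset}; hence $Ran(\sigma) \subseteq gst(\phi)$. Moreover, every name $c \in \fn(\phi)$ occurs in some term $t \in Ran(\sigma)$, so $c \in st(t)$ and therefore $\{c\} = gst(c) \subseteq gst(t) \subseteq gst(\phi)$. Thus $Ran(\sigma) \cup \fn(\phi) \subseteq gst(\phi)$.

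Next I would observe that $gst(\phi)$ is itself closed under the two closure rules of Definition~\ref{def:sat}. This is immediate from the side conditions of those rules: rule~(1) only adds a term $f(M_1,\dots,M_l)$ when $f(M_1,\dots,M_l) \in gst(\phi)$ is assumed, and rule~(2) only adds a term $M$ when $M \in gst(\phi)$ is assumed. Hence $gst(\phi)$ is one of the sets that contains $Ran(\sigma) \cup \fn(\phi)$ and is closed under rules~(1) and~(2). Since $sat(\phi)$ is by definition the \emph{smallest} such set, $sat(\phi) \subseteq gst(\phi)$, and finiteness of $sat(\phi)$ follows from Lemma~\ref{lemma:gst-finite}.

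I do not expect a genuine obstacle here: the two verifications above follow directly by unwinding Definitions~\ref{def:ex-subset} and~\ref{def:sat}, and the only substantive fact used — that $gst(\phi)$ is finite — has already been established in Lemma~\ref{lemma:gst-finite}. If anything needs care, it is merely making explicit the remark following Definition~\ref{def:sat} that one must \emph{not} take $sat(\phi)$ to be all of $gst(\phi)$; the containment goes one way only, and that one-way containment is exactly what yields finiteness.
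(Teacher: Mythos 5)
Your proof is correct and follows essentially the same route as the paper's: the paper also argues that every term added to $sat(\phi)$ beyond the base is forced by the side conditions of Definition~\ref{def:sat} to lie in $gst(\phi)$, and then invokes Lemma~\ref{lemma:gst-finite}. Your version merely makes explicit the (easy) extra observation that the base $Ran(\sigma) \cup \fn(\phi)$ is itself contained in $gst(\phi)$, giving the slightly cleaner containment $sat(\phi) \subseteq gst(\phi)$, but the substance is identical.
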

	\begin{proof}
		New terms not originally contained in the finite set $Ran(\phi) \cup \fn(\phi)$ are only added to $sat(\phi)$ if they are first contained in 
		$gst(\phi)$. Since $gst(\phi)$ is finite by Lemma~\ref{lemma:gst-finite}, $sat(\phi)$ is finite. Moreover, $Ran(\phi) \cup \fn(\phi)$ is a set of terms deducible from $\phi$ modulo $R$. Then, the terms that are added by the two rules given in Definition~\ref{def:sat} are also deducible from $\phi$, either by congruence (cf. first rule) or by rewriting modulo $R$ (cf. second rule).
	\end{proof}

	\begin{rem}
		In Definition~\ref{def:sat}, it would be tempting
		to just place all of $gst(\phi)$ into $sat(\phi)$ immediately, but this
		would add non-deducible terms to the set and invalidate the results. Thus, we include in $sat(\phi)$ only terms belonging to $gst(\phi)$ that are also deducible from $\phi$. 
	\end{rem}

	\subsection{Closure Under Small Context}
        
	The following definition and lemmas will be useful in proving the main motivating result as they show key components of the local stability property given in Definition~\ref{def:locally-stable}.
	
	\begin{defi}[Closure Under Small Context]\label{def:closed-under-context}
                    Let $R$ be any contracting convergent TRS and $\phi = \nu \tilde{n}.\sigma$ any $R$-normalized frame. A finite set $\mathcal{S}$ of ground terms deducible from $\phi$ modulo $R$ is \emph{closed under small $\phi$-restricted context by} $R$ if the following property holds: 
		for any context $C$ with $|C| \leq c_{R}$ and 
		$\fn(C) \cap \tilde{n} = \emptyset$, and any $S_1,\ldots, S_l \in \mathcal{S}$, if $C[S_1, \ldots, S_l] \rightarrow_{R}^{\epsilon} M$ then there exist a context $C'$ and $S_1',\ldots, S_k' \in \mathcal{S}$ such that $|C'| \leq c_{R}^2$, $\fn(C') \cap \tilde{n} = \emptyset$, and $M \rightarrow_{R}^* C'[S_1', \ldots, S_k']$. When $\phi$ is clear from the context, $\mathcal{S}$ is said to be \emph{closed under small context by} $R$.
	\end{defi}
	
	Using $c_R^2$ as an upper bound is somewhat arbitrary since we need just some fixed bound. We use $c_R^2$ since it is sufficient for the results
	in this paper and it is the bound used in~\cite{DBLP:journals/tcs/AbadiC06}.

       In the following, we show that the frame saturation for a contracting convergent TRS $R$ (cf. Definition~\ref{def:sat}) is closed under small context by $R$. 
	
	\begin{lem}\label{lemma:small-context}
		For any contracting convergent TRS $R$ and any $R$-normalized frame $\phi$, let $sat(\phi)$ be the set given in Definition~\ref{def:sat}. Then, $sat(\phi)$ is closed under small context by $R$.
	\end{lem}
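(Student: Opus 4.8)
The plan is to expand the root rewrite $C[S_1,\dots,S_l]\to_R^\epsilon M$ into a rule $\ell\to r\in R$ and a substitution $\theta$ with $C[S_1,\dots,S_l]=\ell\theta$ and $M=r\theta$, and then to do a case analysis on which clause of Definition~\ref{def:cap-contracted} the rule $\ell\to r$ satisfies: it is a subterm rule (right-hand side a strict subterm of $\ell$, or a constant), or a non-subterm rule justified by a projection-closed $R_{gemb}$-derivation at some position $p$ of $\ell$ (in the notation of Definition~\ref{def:cap-contracted}), or a projection-closed permutative equality of depth $2$. I would first record two facts used everywhere. \emph{(a)} $sat(\phi)\subseteq gst(\phi)$ and $gst(\phi)$ is closed under subterms; this follows from Definition~\ref{def:ex-subset} plus the remark that any subterm of a term reachable from $t$ under $\to_{R_{gemb}}$ (up to $\approx$) is again so reachable, via rule $(1)$ of Definition~\ref{def:trs-graph-emb}. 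Hence every strict subterm of an element of $sat(\phi)$ is in $gst(\phi)$. \emph{(b)} (the workhorse) The hypotheses of the lemma are exactly the premises of rule $(2)$ of Definition~\ref{def:sat} once one additionally knows $M\in gst(\phi)$ --- take that rule's context to be $C$ itself and its arguments to be $S_1,\dots,S_l$; so whenever $M\in gst(\phi)$ we get $M\in sat(\phi)$, and then $C':=\diamond_1$, $S_1':=M$ close the case.

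For a subterm rule, write $r=\ell|_q$ with $q\neq\epsilon$ a position of $\ell$ (if instead $r$ is a constant, $M$ is a ground constant and $C':=M$ works, as constants of $R$ are not in $\tilde n$). Then $M=(C[S_1,\dots,S_l])|_q$ and I would split on how $q$ meets the hole positions of $C$: if $q$ is a hole position then $M=S_i\in sat(\phi)$ and $C'=\diamond_1$; if $q$ is a position of $C$ not at or below a hole then $M=(C|_q)[\dots]$ with $C|_q$ a subcontext of $C$, so $|C|_q|\le|C|\le c_R\le c_R^2$, $\fn(C|_q)\cap\tilde n=\emptyset$, and the holes filled by a sub-multiset of $S_1,\dots,S_l$; and if $q$ is strictly below a hole position $q_i$ then $\ell|_{q_i}$ is function-rooted (as $\ell$ has symbols below $q_i$), $S_i=(\ell|_{q_i})\theta$, and $M$ is a strict subterm of $S_i$, hence $M\in gst(\phi)$ and the workhorse gives $M\in sat(\phi)$. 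No projecting rules are needed here since the whole right-hand side travels as a single subterm of $\ell\theta$.

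For a contracting non-subterm rule we have $depth(r)\le 2$, so $M=r\theta$ is the shape of $r$ with each variable-leaf $x$ replaced by the ground term $x\theta$ and each constant-leaf left unchanged; I would build $C'$ as this shape with each occurrence of a variable $x$ of $\ell$ appearing in $r$ replaced by a subcontext. Fix an occurrence of $x$ in $\ell$ at a position $o$ --- for the derivation clause choosing an occurrence inside $\ell|_p$, which exists because $\Var(r)\subseteq\Var(\ell|_p)$. If $o$ is at, or not below, a hole of $C$, then $x\theta$ is either $S_i\in sat(\phi)$ or $(C|_o)[\dots]$, a subcontext of $C$ of size $\le c_R$ with $\fn(C|_o)\cap\tilde n=\emptyset$ over elements of $sat(\phi)$; either way this contributes a bounded, $\tilde n$-avoiding piece to $C'$. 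The only delicate case is when every occurrence of $x$ in $\ell$ lies strictly below a hole $q_i$: then $x\theta$ is a strict subterm of $S_i\in sat(\phi)$, so $x\theta\in gst(\phi)$, and one still needs $x\theta\in sat(\phi)$. Here I would invoke projection-closedness: the derivation step of Definition~\ref{def:projection-closed-deriv} (resp.\ the subterm migration of Definition~\ref{def:projection-closed-permut-eq}) that pulls $x$ out from under a function symbol is accompanied, by hypothesis, by a projecting rule of $R$ over the relevant subterm leading to $x$; applying an instance of that projecting rule as a root rewrite --- after wrapping the relevant subterm of $S_i$ in the rule's left-hand context, whose other arguments are, recursively, shown to be in $sat(\phi)$ --- and using the workhorse once more (the result is a subterm of $S_i$, hence in $gst(\phi)$) yields $x\theta\in sat(\phi)$. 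Plugging the pieces into the shape of $r$ produces $C'$ with $M\to_R^* C'[S_1',\dots,S_k']$ (in fact in zero steps), and a size count --- the subcontexts of $C$ attached at distinct variable occurrences of $\ell$ being pairwise disjoint, $|r|\le c_R$, and every arity being $\le c_R-1$ --- yields $|C'|\le c_R^2$.

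The main obstacle is exactly this delicate case: making rigorous that the projecting rule provided by projection-closedness actually extracts $x\theta$ from deep inside $S_i$. This calls for an induction along the projection-closed derivation (or along the permutation structure), peeling off one function layer of $S_i$ at each step by a root rewrite with the associated projecting rule, while maintaining the invariant that the current term is a subterm of $S_i$ (hence in $gst(\phi)$, hence in $sat(\phi)$ by the workhorse), and correctly tracking which variable instance each projecting rule targets --- this is where linearity of the derivation's source term and the single-occurrence requirement of Definition~\ref{def:projecting} are used. A secondary, more mechanical point to get right is the size bookkeeping guaranteeing $|C'|\le c_R^2$.
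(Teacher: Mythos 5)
Your proposal is correct and takes essentially the same route as the paper's proof: a case analysis on whether the applied rule is subterm, a non-subterm rule with depth-$1$ right-hand side justified by a projection-closed derivation, or a depth-$2$ projection-closed permutative equality, with variable instances occurring at depth at most $1$ in the left-hand side absorbed as subcontexts of $C$ and deeper ones recovered via the required projecting rules and rule~(2) of Definition~\ref{def:sat}. You are in fact more explicit than the paper in two places --- the paper dispatches the subterm case by citing prior work and compresses the projecting-rule extraction into ``thanks to Definition~\ref{def:projection-closed-deriv}'' --- so the induction you flag as the main remaining obstacle is precisely what that appeal elides.
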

	
\begin{proof}
          Let $t=C[S_1,\dots,S_l]$ be a term such that $S_1,\ldots,S_l \in sat(\phi)$, $\fn(C) \cap \tilde{n} = \emptyset$, and $|C| \leq c_{R}$.
          Assume $t \rightarrow_{R}^{\epsilon} t'$, meaning that there exist a rule $l\rightarrow r \in R$ and a substitution $\gamma$ such that $t=l\gamma$ and $t'=r\gamma$.  By analyzing the three different forms of rule $l\rightarrow r$ that occurs in a contracting TRS $R$, we show that $t'$ is necessarily a restricted context instantiated by terms in $sat(\phi)$ such that the size of the context part of $t'$ is less than or equal to $|C|$.

                \begin{enumerate}
		  
		\item Consider $l \rightarrow r$ is a subterm rule, then $r\gamma$ is a subterm of $l\gamma$ and only two cases can occur:
                \begin{itemize}
                \item $r\gamma$ is a restricted context instantiated by terms in $sat(\phi)$ whose context part is a subterm of $C$;
                \item $r\gamma$ is either a constant or a subterm of some term in $sat(\phi)$. By Definition~\ref{def:sat}, $r\gamma$ is in $sat(\phi)$.  
		\end{itemize}
                \noindent In both cases, $r\gamma$ is a restricted context instantiated by terms in $sat(\phi)$. Moreover, the size of the context part of $r\gamma$ cannot be greater than $|C|$.

		\item Consider $l \rightarrow r$ is a rule such that $depth(l)> depth(r)=1$. Assume $r = h(x_1,\dots,x_n)$ and let $i$ be any integer in $\{1,\dots, n\}$. Any $x_i$ occurs in $l$ either at depth at most $1$ or at some depth strictly greater than $1$.
		
		\begin{itemize}
			\item If $x_i$ occurs in $l$ at depth at most $1$, then $x_i\gamma$ is a restricted context instantiated by terms in $sat(\phi)$.
			\item if $x_i$ occurs in $l$ at some depth strictly greater than $1$, then $x_i\gamma$ is in $sat(\phi)$ thanks to Definition~\ref{def:projection-closed-deriv}.
		\end{itemize}
                
                \noindent Then, $r\gamma = h(x_1\gamma,\dots,x_n\gamma)$ is a restricted context instantiated by terms in $sat(\phi)$. Given the context part $C_i$ of $x_i\gamma$ for any $i=1,\dots,n$, we have that $\sum_{i=1}^n |C_i|$ is strictly less than $|C|$. Therefore, the size of the context part of $r\gamma$ cannot be greater than $|C|$.

		\item Consider $l \rightarrow r$ is a rule such that $depth(l)= depth(r) = 2$. Assume $r=h(r_1,\dots,r_n)$, and let $i$ be any integer in $\{1,\dots, n\}$.
		
		\begin{itemize}
			\item If $r_i$ is a variable occurring at depth at most $1$ in $l$, then $r_i\gamma$ is a restricted context instantiated by terms in $sat(\phi)$.
			
			\item If $r_i$ is a variable occurring at depth $2$ in $l$, then $r_i\gamma$ is in $sat(\phi)$ thanks to Definition~\ref{def:projection-closed-permut-eq}.
			
			\item If $r_i$ is a non-variable term occurring as a direct subterm $l_j$ of $l$ for some $j \in [1,n]$, then $r_i\gamma = l_j\gamma$ is a restricted context instantiated by terms in $sat(\phi)$.
			
			\item If $r_i$ is a non-variable term $f(\bar{x})$ not occurring as a direct subterm of $l$, then there are two cases for any variable $x \in \bar{x}$: if $x$ occurs at depth at most $1$ in $l$, then $x\gamma$ is a restricted context instantiated by terms in $sat(\phi)$; otherwise $x$ also occurs at depth $2$ in $l$ and $x\gamma$ is in $sat(\phi)$ thanks to Definition~\ref{def:projection-closed-permut-eq}.
		\end{itemize}

                \noindent Then, $r\gamma = h(r_1\gamma,\dots,r_n\gamma)$ is a restricted context instantiated by terms in $sat(\phi)$. By definition of $l\rightarrow r$ where $r$ is obtained from $l$ via root and leaf permutations, the size of the context part of $r\gamma$ cannot be greater than $|C|$.

                \end{enumerate}

                Consequently, in all the three cases above, $r\gamma$ is a term $C'[S_1',\dots,S_k']$ such that $S_1',\ldots,S_k' \in sat(\phi)$, $\fn(C') \cap \tilde{n} = \emptyset$, and $|C'| \leq |C|$.
        \end{proof}

	\subsection{Local Stability}\label{sec:localstability}

        We can now show that any contracting convergent TRS has the local stability property introduced in Definition~\ref{def:locally-stable}.
For the frame saturation given in Definition~\ref{def:sat}, it suffices to check that all the items defining local stability (cf Definition~\ref{def:locally-stable}) are satisfied.
	
	\begin{thm}\label{thm:locally-stable}
		Any contracting convergent TRS is locally stable.
	\end{thm}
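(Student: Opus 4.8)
The plan is to take $sat(\phi)$ to be the set constructed in Definition~\ref{def:sat} and to check that it witnesses local stability for every $R$-normalized frame $\phi = \nu\tilde{n}.\sigma$. Its finiteness is Lemma~\ref{lemma:sat-finite}, so the work lies in the four bullet conditions of Definition~\ref{def:locally-stable}. The third of these --- closure under small $\phi$-restricted contexts --- is exactly the conclusion of Lemma~\ref{lemma:small-context} applied to $\mathcal{S} = sat(\phi)$, and this is where the contracting hypothesis (through the projection-closed derivations of Definition~\ref{def:projection-closed-deriv} and the projection-closed permutative equalities of Definition~\ref{def:projection-closed-permut-eq}) is genuinely used. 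So the remaining task is to dispatch the other three conditions, which should be routine bookkeeping about how $sat(\phi)$ sits inside $gst(\phi)$.

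The first condition, $Ran(\sigma) \subseteq sat(\phi)$ and $\fn(\phi) \subseteq sat(\phi)$, is immediate from Definition~\ref{def:sat}. For the second condition I would first record two facts about $gst$. First, $sat(\phi) \subseteq gst(\phi)$: every term added by closure rule~(1) or~(2) of Definition~\ref{def:sat} is required to be in $gst(\phi)$, while for the seed terms $Ran(\sigma) \subseteq gst(\phi)$ (a well-formed term is graph-embedded in itself) and $\fn(\phi) \subseteq gst(\phi)$ (a name occurring in some $t \in Ran(\sigma)$ is a subterm of $t$, and $gst(t)$ contains $gst$ of all subterms of $t$). Second, $gst(\phi)$ is closed under taking subterms: $\succcurlyeq_{gemb}$ is a partial order and hence transitive, the subterm relation is contained in $\succcurlyeq_{gemb}$ (project with rule~(1) of $R_{gemb}$ at the root down to any subterm), and $gst(t)$ collects all well-formed $t'$ with $t \succcurlyeq_{gemb} t'$. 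Combining the two facts gives $st(sat(\phi)) \subseteq gst(\phi)$, so whenever $M_1,\dots,M_k \in sat(\phi)$ and $f(M_1,\dots,M_k) \in st(sat(\phi))$ we get $f(M_1,\dots,M_k) \in gst(\phi)$, and closure rule~(1) of Definition~\ref{def:sat} puts it into $sat(\phi)$.

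The fourth condition, that every $M \in sat(\phi)$ is deducible from $\phi$, is Lemma~\ref{lemma:deduce-from-sat}; it can also be seen directly by induction on the construction of $sat(\phi)$. A domain variable is a recipe for the corresponding term of $Ran(\sigma)$; a name $c \in \fn(\phi)$ is a recipe for itself, and it satisfies the name restriction since $c \notin \tilde{n}$; and the two closure rules preserve deducibility, because if $\zeta_1,\dots,\zeta_l$ are recipes for $M_1,\dots,M_l$ then $f(\zeta_1,\dots,\zeta_l)$ is a recipe for $f(M_1,\dots,M_l)$ (rule~(1)), and $C[\zeta_1,\dots,\zeta_l]$ is a recipe for $M$ when $C[M_1,\dots,M_l] \rightarrow_R^\epsilon M$, since then $C[\zeta_1,\dots,\zeta_l]\sigma =_R C[M_1,\dots,M_l] =_R M$ and $\fn(C) \cap \tilde{n} = \emptyset$ (rule~(2)). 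Assembling the four items yields local stability of $R$.

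As for the main obstacle: there is essentially none left at this stage, since all the genuine difficulty has already been absorbed into Lemma~\ref{lemma:small-context}. Within this assembly step, the only slightly delicate point is reconciling the subterm-based phrasing of the second item of Definition~\ref{def:locally-stable} with the $gst$-based phrasing of the closure rules in Definition~\ref{def:sat}, which is precisely the role of the two auxiliary facts $sat(\phi) \subseteq gst(\phi)$ and subterm-closure of $gst$.
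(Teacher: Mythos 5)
Your proposal is correct and follows essentially the same route as the paper: take $sat(\phi)$ from Definition~\ref{def:sat}, get the third condition from Lemma~\ref{lemma:small-context} and the fourth from Lemma~\ref{lemma:deduce-from-sat}, with the first two read off the construction of $sat(\phi)$. Your explicit justification of the second condition via $sat(\phi) \subseteq gst(\phi)$ and subterm-closure of $gst(\phi)$ fills in a step the paper leaves implicit, but it is the same argument, not a different one.
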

	\begin{proof}
		The first two conditions follow from Definition~\ref{def:sat} where $sat(\phi)$ is given in the particular case of a contracting convergent TRS. Then, the third condition follows from Lemma~\ref{lemma:small-context}. 
		The final condition follows from Lemma~\ref{lemma:sat-finite}.
	\end{proof}
	
	\noindent
	Directly from Theorem~\ref{thm:locally-stable} and Theorem~\ref{thm:know-by-local-stability}, we obtain the following corollary.
	
	\begin{cor}\label{cor:deduction-static-equiv}
		The deduction and static equivalence problems are both decidable for the class of contracting convergent TRSs.
	\end{cor}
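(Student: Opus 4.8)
The plan is simply to chain together the two results already in hand. First I would invoke Theorem~\ref{thm:locally-stable}, which states that every contracting convergent TRS is locally stable; this is the substantive ingredient, and it itself rests on the finiteness of $sat(\phi)$ (Lemma~\ref{lemma:sat-finite}), the closure of $sat(\phi)$ under small context (Lemma~\ref{lemma:small-context}), and the fact that every element of $sat(\phi)$ is deducible from $\phi$ (Lemma~\ref{lemma:deduce-from-sat}). Then I would appeal to Theorem~\ref{thm:know-by-local-stability}, the result imported from~\cite{DBLP:journals/tcs/AbadiC06}, which asserts that deduction and static equivalence are both decidable in any locally stable TRS. Composing the two immediately yields the decidability of both knowledge problems for the class of contracting convergent TRSs, so at the level of the corollary the proof is a one-line implication.

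The only genuine work lies behind Theorem~\ref{thm:locally-stable} rather than in the corollary, so there is no real obstacle at this level. Were one proving Theorem~\ref{thm:locally-stable} from scratch, the crux would be Lemma~\ref{lemma:small-context}: showing that whenever a small $\phi$-restricted context instantiated with terms of $sat(\phi)$ rewrites at the root by a rule $l \rightarrow r$ of the contracting TRS, the image can be re-expressed — after further $R$-normalization — as a small context over $sat(\phi)$. This needs a case analysis on the shape of $l \rightarrow r$ (a subterm rule, a depth-lowering rule arising from a projection-closed $R_{gemb}$-derivation, or a depth-$2$ projection-closed permutative equality), and it is precisely the projecting rules guaranteed by Definitions~\ref{def:projection-closed-deriv} and~\ref{def:projection-closed-permut-eq} that allow one to recover inside $sat(\phi)$ the deep variable instances that a graph-embedded step may relocate. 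For the corollary as stated, however, nothing beyond the straightforward composition of Theorems~\ref{thm:locally-stable} and~\ref{thm:know-by-local-stability} is required.
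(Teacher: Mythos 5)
Your proposal is correct and matches the paper exactly: the corollary is obtained directly by composing Theorem~\ref{thm:locally-stable} with Theorem~\ref{thm:know-by-local-stability}, and your identification of Lemma~\ref{lemma:small-context} as the substantive ingredient behind local stability also agrees with the paper's development.
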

          The deduction and static equivalence problems are thus decidable for any TRS from Example~\ref{example:cap-contracting}.

	\section{Relation to Existing Notions and Properties}\label{sec:relations}
	In this section we consider an additional knowledge problem, the cap problem~\cite{DBLP:conf/rta/AnantharamanNR07}, the relation of contracting TRS to the finite variant property~\cite{Comon-LundhD05}, and the relation to the layered convergent property~\cite{DBLP:journals/tocl/BaudetCD13}. 
	
	\subsection{The Cap Problem}\label{sec:relations-cap}
	The cap problem~\cite{DBLP:conf/rta/AnantharamanNR07,Anantharaman2007HAL} is another knowledge problem introduced to model an intruder's ability to obtain access to something that was intended to be secret. We show that the cap problem can be considered as a particular deduction problem.
	
	\begin{defi}[Cap Problem]
          Let $\Sigma$ be a signature containing a constant, $m$, called the \emph{secret}. A subset $\Sigma_{ir} \subseteq \Sigma \setminus \{m\} $ of publicly known symbols from the signature $\Sigma$ is called the \emph{intruder repertoire}. Symbols not in the intruder repertoire are called \emph{private}. The intruder repertoire is said to be \emph{complete} if $\Sigma_{ir} = \Sigma \setminus \{m\}$. A cap term is a linear term built over $\Sigma_{ir}$. Given a set $S$ of ground terms built over $\Sigma$ such that at least one of the terms contains $m$, and a convergent TRS, $R$, over $\Sigma \setminus \{m\}$, the \emph{cap problem} asks if there exist a cap term $t$ with $\Var(t) = \{ x_1,\dots,x_n\}$ and ground terms $s_1,\dots,s_n \in S$ such that $t \{ x_1 \mapsto s_1,\dots,x_n \mapsto s_n \} \downarrow_{R} = m$.
	\end{defi}
	
	As shown below, the cap problem can be easily related to the deduction problem, provided that the intruder repertoire is complete.
 
	\begin{lem}\label{lem:cap-problem}
	 If the intruder deduction problem is decidable for a convergent TRS $R$, then the cap problem with a complete intruder repertoire is also decidable for $R$.
	\end{lem}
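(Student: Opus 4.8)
The plan is to reduce the cap problem with a complete intruder repertoire to the deduction problem for the same TRS $R$, and then invoke the assumed decidability of the latter. Given a cap problem instance consisting of a finite set $S=\{s_1,\dots,s_k\}$ of ground $\Sigma$-terms (one of which contains the secret $m$) and the convergent TRS $R$ over $\Sigma\setminus\{m\}$, I would build the frame $\phi_S=\nu\{m\}.\sigma_S$ with $\sigma_S=\{w_1\mapsto s_1,\dots,w_k\mapsto s_k\}$ for fresh pairwise distinct variables $w_1,\dots,w_k$, and take the target term to be the constant $m$. The intuition is that treating $m$ as the unique restricted name forces any recipe to be a term over $\Sigma\setminus\{m\}=\Sigma_{ir}$, which is exactly the constraint on cap terms, while the frame variables $w_i$ play the role of the ``slots'' of a cap term that get filled with elements of $S$. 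Since $R$ does not mention $m$, the constant $m$ is $R$-irreducible, so for any recipe $\zeta$ we have $\zeta\sigma_S=_R m$ if and only if $(\zeta\sigma_S)\da_R=m$; this lets me pass freely between the equational formulation of deduction (Definition~\ref{def:deduction}) and the normal-form formulation used in the cap problem.

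I would then prove the equivalence ``$\phi_S\vdash_R m$ iff the cap instance has a solution'' in both directions. For the forward direction, given a cap term $t$ with $\Var(t)=\{x_1,\dots,x_n\}$ and $s_{i_1},\dots,s_{i_n}\in S$ witnessing a solution, set $\zeta=t\{x_1\mapsto w_{i_1},\dots,x_n\mapsto w_{i_n}\}$; then $\zeta\sigma_S=t\{x_1\mapsto s_{i_1},\dots,x_n\mapsto s_{i_n}\}$, so $(\zeta\sigma_S)\da_R=m$, and since $t$ is built over $\Sigma_{ir}$ the name $m$ does not occur in $\zeta$, so $\fn(\zeta)\cap\{m\}=\emptyset$ and $\zeta$ is a legal recipe. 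Conversely, from a recipe $\zeta$ with $\zeta\sigma_S=_R m$ and $m\notin\fn(\zeta)$ (so $\Var(\zeta)\subseteq\{w_1,\dots,w_k\}$ and $\zeta$ is built over $\Sigma_{ir}$), I would \emph{linearize} $\zeta$: replace the occurrences of frame variables in $\zeta$ one by one with fresh distinct variables $x_1,\dots,x_n$ to obtain a linear term $t$ over $\Sigma_{ir}$, and let $s_{i_j}\in S$ be the value of the frame variable standing at the $j$-th replaced occurrence; then $t\{x_1\mapsto s_{i_1},\dots,x_n\mapsto s_{i_n}\}=\zeta\sigma_S$ has $R$-normal form $m$, so $t$ is a cap-problem solution. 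Combining the two directions, a decision procedure for deduction on $R$ applied to $\phi_S$ and $m$ decides the cap problem instance.

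The reduction itself is routine; the points that need care are the linearization step (a recipe may use a frame variable several times whereas a cap term must be linear, which is harmless precisely because the cap problem permits the slots $s_1,\dots,s_n$ to be chosen from $S$ with repetition), and the observation that completeness of the intruder repertoire is exactly what makes ``being a term over $\Sigma_{ir}$'' coincide with ``not mentioning the single restricted name $m$'' — without completeness there would be private non-secret symbols that recipes may not use, which the standard deduction problem does not model, so the hypothesis is genuinely needed. I expect no real obstacle beyond stating these bookkeeping facts precisely and noting that finiteness of $S$ is what keeps $\phi_S$ a legitimate finite-domain frame.
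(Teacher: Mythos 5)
Your reduction is exactly the one the paper uses: build the frame $\nu\{m\}.\sigma$ whose range is the set $S$ and ask whether the constant $m$ is deducible, with recipes corresponding to (instantiated) cap terms. The paper states the correspondence in one line ("following the respective definitions"), whereas you spell out the two directions, the linearization of the recipe, and the role of completeness of the repertoire — all correct, but the same argument.
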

	\begin{proof}
	  Consider the frame $\phi = \nu \{ m \}.\sigma$ where $m$ is the constant denoting the secret and $\sigma = \{ x_1 \mapsto s_1, ~\ldots, ~x_n \mapsto s_n \}$ where $S = \{ s_1,\dots,s_n\}$ is the set of ground terms given by the cap problem. Then, following the respective definitions, there exists a recipe $\zeta$ such that $\zeta \sigma \downarrow_{R} = m$ iff there exists a cap term for the cap problem. 
	\end{proof}

	Applying Theorem~\ref{thm:locally-stable} together with the above lemma, we obtain the following.
	\begin{cor}
The cap problem with a complete intruder repertoire is decidable for any contracting convergent TRS.
	\end{cor}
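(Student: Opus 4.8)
The plan is simply to compose the three results already established. First I would invoke Theorem~\ref{thm:locally-stable}, which states that every contracting convergent TRS is locally stable. Combined with Theorem~\ref{thm:know-by-local-stability}, this immediately yields that the (intruder) deduction problem is decidable for any contracting convergent TRS $R$; the hypotheses of Theorem~\ref{thm:know-by-local-stability} are met since, as throughout the paper, we work without $AC$-symbols. Equivalently, one could just cite Corollary~\ref{cor:deduction-static-equiv} at this point.

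Second, I would feed this into Lemma~\ref{lem:cap-problem}: since the intruder deduction problem is decidable for the convergent TRS $R$, the cap problem with a complete intruder repertoire is decidable for $R$. Chaining the two implications gives the corollary.

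There is essentially no obstacle here, only a minor point of signature bookkeeping to check. In the cap problem, $R$ is required to be a convergent TRS over $\Sigma \setminus \{m\}$, and completeness of the repertoire means $\Sigma_{ir} = \Sigma \setminus \{m\}$, which is exactly the setting in which the reduction of Lemma~\ref{lem:cap-problem} turns an arbitrary cap term into a recipe over the frame's signature. Being a contracting convergent TRS is a property of $R$ alone, so it carries over to this setting unchanged, and the decidability of deduction obtained in the first step is precisely what Lemma~\ref{lem:cap-problem} needs as input. Hence the cap problem with a complete intruder repertoire is decidable for any contracting convergent TRS.
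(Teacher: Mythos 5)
Your proof is correct and matches the paper's own argument exactly: the paper derives this corollary by "Applying Theorem~\ref{thm:locally-stable} together with the above lemma," i.e., local stability gives decidability of deduction (Corollary~\ref{cor:deduction-static-equiv}), and Lemma~\ref{lem:cap-problem} then transfers this to the cap problem with a complete intruder repertoire. Nothing further is needed.
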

	
	\begin{rem}
		As illustrated above one of the differences between the cap problem and deduction is that the cap problem allows for a restricted signature over which the intruder must work. The deduction problem assumes that essentially the entire signature, minus some restricted names, is public. However, it should be possible to extend the definition of deduction to allow for a restricted intruder repertoire. This new definition would then encapsulate the current deduction definition and the cap problem. Indeed, although this would need to be shown, it seems like the current deduction procedures could work for this new definition by restricting their saturation procedures to building terms only over the intruder repertoire.  
	\end{rem}

	\subsection{Relation to the  Finite Variant Property}\label{sec:properties}
	The \emph{Finite Variant Property} (FVP) is a useful property which is utilized in a number of applications, including protocol analysis. The FVP was introduced in~\cite{Comon-LundhD05} (see also~\cite{Bouchard13, DBLP:journals/jlp/EscobarSM12} for additional examples in using the FVP).
	It has been shown~\cite{Comon-LundhD05} that a TRS has the FVP iff it has
	the following boundedness property.
	
	\begin{defi}[Boundedness Property]\label{def:bounded}
		A convergent TRS, $R$, has the \emph{boundedness property} if 
		$\forall t \exists n \forall \sigma ~: ~t(\sigma\da) \rightarrow_{R}^{\leq n} (t \sigma) \da$. That is, for any term $t$ there exists a bound, $n$, on the number of step required
		to reach the normal form, and this bound is independent of the substitution. 
	\end{defi}

	 One could naturally ask if the graph-embedded or contracting definitions just lead to systems with the FVP.  This is not the case but some of the contracting systems listed in Example~\ref{example:cap-contracting}, such as blind signatures, do have the FVP as shown below. This is not surprising, given that the FVP can be useful for proving properties like termination. Another interesting question could be: are there meaningful examples from the protocol analysis literature for which deduction and static equivalence are decidable, do not have the FVP, but are representable by contracting convergent TRSs? Here we answer this question positively.  
	
	\begin{exa}\label{example:add2}
		Consider again the theory of Addition, $R_{add}$, from Example~\ref{example:add}.
		$R_{add}$ is a contracting convergent TRS, is locally stable, and contains no $AC$-symbols, thus deduction and static equivalence are decidable.
		However, $R_{add}$ does not have the FVP, we can see this by considering the rule $\plus(x, s(y))  \rightarrow \plus(s(x), y)$ and the boundedness property. Notice that 
		for any finite bound $n$ one can
		select a normal form substitution, $\sigma$, such that 
		$\plus(x, s(y))\sigma \overset{> n}{\rightarrow_{R_{add}}} (\plus(x, s(y))\sigma)\da$. Namely, $\sigma = \{ y \mapsto s^{n+1}(z) \}$. Since 
		$R_{add}$ does not have the boundedness property it cannot have the FVP~\cite{Comon-LundhD05}. Yet, $R_{add}$ is a contracting convergent TRS. Notice that the second and third rules are already subterm. The first rule is obtained by applying Definition~\ref{def:permutative}. Therefore, $R_{add}$ satisfies Corollary~\ref{cor:deduction-static-equiv}.
	\end{exa}

As a positive result, there is an easy way to get a strictly contracting TRS with the FVP:

\begin{lem}
\label{lem:sufficient-cond-FVP}
Assume $R$ is any strictly contracting convergent TRS such that for any rule $l \rightarrow r$ in $R$, $r$ is either a variable or a non-variable term rooted by a constructor symbol. Then, $R$ has the FVP. 
\end{lem}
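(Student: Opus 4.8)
The plan is to establish the boundedness property of Definition~\ref{def:bounded}, which is equivalent to the FVP: for each term $t$ I will give a bound $n$, independent of $\sigma$, on the length of a normalizing rewrite sequence from $t(\sigma\da)$ to $(t\sigma)\da$. Concretely, I claim that $n = |\FP(t)|$ works, and more precisely that \emph{every innermost rewrite sequence} issuing from $t(\sigma\da)$ (one in which each contracted redex has all its proper subterms in $R$-normal form) has length at most $|\FP(t)|$; since $R$ is convergent such a sequence is finite and necessarily ends at $(t\sigma)\da=(t(\sigma\da))\da$.

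The crux -- and the only place both hypotheses are used -- is the following observation: \emph{if an innermost redex $l\gamma \rightarrow_R r\gamma$ is contracted, then $r\gamma$ is already in normal form.} I would check this by the case analysis on $l\rightarrow r$ dictated by Definition~\ref{def:cap-contracted}. If $r$ is a strict subterm of $l$, then $r\gamma$ is a proper subterm of $l\gamma$, hence normal. If $r$ is a constant, then since $R$ is strictly contracting no rule can have a constant left-hand side (that would violate $depth(l)\geq 1 > 0 = depth(r)$), so every constant is a constructor and $r\gamma=r$ is normal. In the remaining case $l\rightarrow r$ is a non-subterm rule, and strict contraction rules out the permutative-equality clause of Definition~\ref{def:cap-contracted} (that clause forces $depth(l)=depth(r)$), so $r$ has depth $1$; by hypothesis $r$ is then headed by a constructor symbol, and since its immediate subterms are subterms of normal forms occurring in the redex, $r\gamma$ is again normal.

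Granting this, I would track an innermost derivation $s_0 = t(\sigma\da) \rightarrow_R s_1 \rightarrow_R \cdots$ by maintaining a set $L_i$ of ``live'' positions of $t$, with $L_0$ the set of all positions of $t$, such that $s_i$ agrees with $t$ on $L_i$ while every maximal subterm of $s_i$ that does not meet the $L_i$-region is in normal form (initially these are the components of $\sigma\da$, which are normalized; later ones are contracta of earlier steps). A short induction shows that the root $q$ of the redex contracted at step $i$ must lie in $L_i$ with $t(q)$ a function symbol -- no redex can occur inside a normal form, in particular inside a component of $\sigma\da$ or inside an already-vacated subtree -- and, by the crux observation, immediately after the step the whole subterm of $s_{i+1}$ at $q$ is a frozen normal form. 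Hence the invariants persist with $L_{i+1} := L_i \setminus \{p \in L_i : q \text{ is a prefix of } p\}$, and since the node $q \in \FP(t)$ is removed at each step, $|L_i \cap \FP(t)|$ strictly decreases; so the derivation has at most $|\FP(t)|$ steps, which is the desired uniform bound.

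I do not expect a serious obstacle: the points needing attention are purely bookkeeping -- that a contracted redex may legitimately reach below a variable leaf of $t$ into a component of $\sigma\da$ (harmless, since only the \emph{root} of the redex figures in the invariant), and verifying the case split in the crux observation against the precise statement of Definition~\ref{def:cap-contracted}, in particular that ``strictly contracting'' is exactly what excludes its depth-preserving permutative clause.
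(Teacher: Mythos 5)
Your proof is correct, but it takes a genuinely different route from the paper's. The paper argues via \emph{forward closure}: since every right-hand side is a variable or a depth-$1$ term rooted by a constructor (depth $1$ because strict contraction excludes the depth-preserving permutative clause of Definition~\ref{def:cap-contracted}), the only candidate position for a non-variable overlap of a right-hand side with a left-hand side is the root, and a constructor cannot match the root of any left-hand side; hence $R$ is forward-closed, and the paper concludes by citing the characterization of the FVP as finiteness of the forward closure~\cite{Bouchard13}. You instead verify the boundedness property of Definition~\ref{def:bounded} directly, with the explicit uniform bound $|\FP(t)|$ obtained by tracking an innermost derivation. The two arguments hinge on the same structural fact -- an instance of a right-hand side whose argument instances are normal is itself normal -- but the paper packages it as ``no right-into-left overlaps'' and delegates the rest to the cited theorem, whereas you unfold it into an explicit induction on positions of $t$. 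What your version buys is a concrete linear bound on the number of normalization steps (hence on variant computation), at the cost of the bookkeeping invariant; what the paper's buys is brevity. One small point of care in your crux: a constant occurring in $r$ but not in $l$ is not literally ``a subterm of a normal form occurring in the redex,'' but as you observe earlier it is irreducible anyway since strict contraction forbids constant left-hand sides, so the argument stands.
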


\begin{proof}
Consider any rule $l \rightarrow r$ in $R$. By assumption, $R$ is strictly contracting, and so $r$ is of depth at most $1$. If $r$ is variable, there is no way to overlap at a non-variable position in $r$ with the left-hand side of any rule in $R$. Assume now $r$ is of depth $1$. If there is a non-variable overlap in $r$ with the left-hand side of any rule in $R$, this overlap can only occur at the root position of $r$. Since the function symbol at the root position of $r$ is necessarily a constructor symbol, it cannot occur at the root-position of any left-hand side of $R$. Since there is no non-variable overlap in $r$ with the left-hand side of any rule in $R$, this is simple way for $R$ to satisfy the property of being forward-closed as defined in~\cite{Bouchard13}. Then, relying on the fact that a TRS has the FVP iff it has a finite forward closure~\cite{Bouchard13}, we can conclude that $R$ has the FVP.
\end{proof}

\begin{exa}
For the theory of blind signatures and the theory of prefix with pairing (Example~\ref{example:cap-contracting}), the corresponding TRSs satisfy the assumption of Lemma~\ref{lem:sufficient-cond-FVP}. Consequently, these two strictly contracting convergent TRSs have the FVP.

For the theory of trap-door commitment, the strictly contracting convergent TRS obtained  by considering additional rules (see Example~\ref{example:not-cap-contracting}) does not satisfy the assumption of Lemma~\ref{lem:sufficient-cond-FVP}.
\end{exa}

	\subsection{Relation to the Layered Convergent Property}\label{sec:relations-layered}
	The \emph{layered convergent} property is yet another useful property for TRS modeling security protocols because it can be used to ensure that the  YAPA~\cite{DBLP:journals/tocl/BaudetCD13} tool for protocol analysis does not fail.
	YAPA is a tool for computing intruder knowledge in the formal analysis of security protocols and it is able to work with a large number of protocol specifications. While YAPA allows us to consider private symbols in addition to public ones, we assume here that all the function symbols are publicly known. In this classical setting, the tool is not guaranteed to terminate or return successfully, it could return a ``don't know'' answer. However, it is shown in~\cite{DBLP:journals/tocl/BaudetCD13} that the ``layered convergent'' property  for a given TRS, with some additional conditions, can be used to show termination. This property is defined using a particular term decomposition to express any left-hand side of the TRS. 
        
	\begin{defi}[Term Decomposition]\label{def:decomp}
		Let $n,~p$, and $q$ be non-negative integers. A $(n,p,q)$-\emph{decomposition} of a term $l$ is a context $C$ with $n+p+q$ context holes and $l = C[l_1, \ldots, l_n, ~y_1, \ldots, y_p, ~z_1, \ldots, z_q]$ where:
		\begin{itemize}
			\item $l_1, \ldots, l_n$ are mutually distinct non-variable terms,
			\item $y_1, \ldots, y_p$ and $z_1, \ldots, z_n$ are mutually distinct variables, and
			\item $y_1, \ldots, y_p \in Var(l_1, \ldots, l_n)$ but $z_1, \ldots, z_n \not\in Var(l_1, \ldots, l_n)$.
		\end{itemize}
	\end{defi}

	\begin{defi}[Layered TRS]\label{def:layer-convergent}
		A TRS, $R$, is \emph{layered} if there exists an ascending chain of sub sets 
		$\emptyset = R_{0} \subseteq R_{1} \subseteq \dots R_{n+1} = R\: (n \geq 0)$ s.t.  
		for every rule $l \rightarrow r \in R_{i+1} \backslash R_{i}$ and every $(n,p,q)$-decomposition
		$l=D[l_1, \ldots, l_m, ~y_1, \ldots, y_p, ~z_1, \ldots, z_q]$ one of the following holds:
		\begin{enumerate}
			\item $\Var(r) \subseteq \Var(l_1, \ldots, l_m)$.
			\item There exist  $C_0, C_1, \ldots C_k$ and $s_1, \ldots s_k$ such that $r = C_0[s_1, \ldots, s_k]$ and for any $j=1,\dots,k$, we have $C_j[l_1, \ldots, l_m, ~y_1, \ldots, y_p, ~z_1, \ldots, z_q] \stackrel{\epsilon}{\rightarrow}_{R_i}^{0,1} s_j$.
		\end{enumerate}
	\end{defi}

	\begin{rem}
	What is interesting here is that both the contracting definition (Definition~\ref{def:cap-contracted}) and the layered definition (Definition~\ref{def:layer-convergent}) point to the ability of accessing subterms 
	as a key component to ensuring that the knowledge problems are decidable. Indeed, there is a relation between the two definitions, as shown in the next result. 
	\end{rem}
	 
	\begin{thm}\label{thm:contracting=>layered}
		Any contracting TRS is layered. 
	\end{thm}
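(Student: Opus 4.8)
The plan is to produce an explicit layering chain and to verify the two clauses of Definition~\ref{def:layer-convergent} rule by rule, in the same spirit as the proof of Lemma~\ref{lemma:small-context}. I would use the length-$1$ chain $\emptyset = R_0 \subseteq R_1 \subseteq R_2 = R$, where $R_1$ collects all \emph{subterm} rules of $R$ (those $l\rightarrow r$ with $r$ a strict subterm of $l$ or a constant). Every projecting rule of $R$ belongs to $R_1$, since in a projecting rule $t'\rightarrow x$ the variable $x$ occurs inside the non-variable term $t'$, hence is a strict subterm of $t'$; in particular all the projecting rules witnessing the contracting property of $R$ (Definitions~\ref{def:projection-closed-deriv} and~\ref{def:projection-closed-permut-eq}) are available in $R_1$. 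For a subterm rule $l\rightarrow r\in R_1\setminus R_0$ this is the familiar fact that subterm TRSs are layered: given a $(n,p,q)$-decomposition $l = D[l_1,\dots,l_m,y_1,\dots,y_p,z_1,\dots,z_q]$, if $r$ is a constant then $\Var(r)=\emptyset$ and clause~(1) holds; if $r$ is a strict subterm of $l$, inspect its occurrence in $l$: if it lies inside, or equals, some $l_i$, or equals some $y_j$, then $\Var(r)\subseteq\Var(l_1,\dots,l_m)$ and clause~(1) holds; otherwise the root of that occurrence lies in the context part of $D$ (so $r$ may be some $z_j$), hence $r$ is itself a context over a sub-multiset of the pieces and clause~(2) holds with each $C_j$ the single hole of the corresponding piece, using zero rewrite steps, so $R_0=\emptyset$ suffices.

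For a non-subterm rule $l\rightarrow r\in R_2\setminus R_1$, the contracting hypothesis gives one of the two items of Definition~\ref{def:cap-contracted}: either $r=d\sigma$ has depth $1$, or $r$ has depth $2$ and $l=r$ is a projection-closed permutative equality. Fix a decomposition $l = D[l_1,\dots,l_m,y_1,\dots,y_p,z_1,\dots,z_q]$; if $\Var(r)\subseteq\Var(l_1,\dots,l_m)$ clause~(1) holds, so assume not and aim for clause~(2). Write $r = C_0[s_1,\dots,s_k]$ where $C_0$ is $r$ with a hole at each variable-leaf occurrence (keeping all function symbols and constants), so that every $s_j$ is a variable of $r$, hence of $l$ since $\Var(r)\subseteq\Var(l)$. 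It remains, for each $j$, to exhibit a context $C_j$ with $C_j[l_1,\dots,l_m,y_1,\dots,y_p,z_1,\dots,z_q] \stackrel{\epsilon}{\rightarrow}_{R_1}^{0,1} s_j$. If $s_j$ occurs at a hole of $D$, i.e.\ $s_j$ is some $y_t$ or $z_t$, take $C_j$ to be that hole, using zero rewrite steps. Otherwise $s_j$ occurs only strictly inside some piece $l_i$, so $s_j\in\Var(l_1,\dots,l_m)$ but $s_j$ is not a piece; here the projection-closed hypothesis supplies a projecting rule $t'\rightarrow x'$ in $R$, hence in $R_1$, over the subterm $l'$ of $l$ from which $s_j$ is exposed by the relevant rule-$(1)$ or rule-$(4)$ step of the $R_{gemb}$-derivation (resp.\ by the relevant pair of strict subterms in Definition~\ref{def:projection-closed-permut-eq}); one checks that $l'$ is precisely the piece $l_i$ (the technical point, discussed below), and then takes $C_j$ to be $t'$ with the occurrence of $l_i$ turned into the hole for $l_i$, every other variable position of $t'$ filled by the piece realising it, and the constants of $t'$ kept. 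Then $C_j[\dots]$ is an instance of $t'$ rewriting in one $R_1$-step at the root to $s_j$. This yields clause~(2), so $R$ is layered.

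I expect the final construction to be the main obstacle: one must justify that the subterm of $l$ occupied by the decomposition piece $l_i$ really is the instance over which the projecting rule $t'\rightarrow x'$ operates, and that the remaining leaves of $t'$ genuinely occur as pieces of the decomposition. This requires careful accounting of how $g$, the variable substitution $\sigma$, and the intermediate (possibly non-well-formed) terms of the $R_{gemb}$-derivation sit over the positions of $l$, together with three supporting observations that rule out the awkward configurations: the trivial decomposition $l_1=l$ always falls under clause~(1); whenever a piece is large enough to contain the whole of $\Var(r)$ clause~(1) again applies; and in any remaining non-trivial decomposition the requirement that the pulled-out variables be \emph{distinct} forces the auxiliary leaves of $t'$ to appear as pieces rather than being hidden inside a single $l_i$ (this is exactly the phenomenon in blind signatures, where the two occurrences of $y$ in $\unblind(\sign(\blind(x,y),z),y)$ cannot both be pulled into holes). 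Here it is essential that contracting rules are built using only the $R_{gemb}$-schemas $(1)$, $(2)$ and $(4)$, none of which relocates a node beneath a new parent, so that the projecting rule never calls for material absent from $l$.
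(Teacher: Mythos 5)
Your proposal follows essentially the same route as the paper's own proof: the two-layer chain with $R_1$ the subterm rules of $R$ (which contain all projecting rules), $C_0$ taken to be $r$ with holes at its variable leaves, and the projection-closed conditions of Definitions~\ref{def:projection-closed-deriv} and~\ref{def:projection-closed-permut-eq} supplying the one-step $R_1$-reductions $C_j[\ldots]\stackrel{\epsilon}{\rightarrow}_{R_1}s_j$ for the variables that are buried inside a piece $l_i$. The ``technical point'' you flag --- checking that the projecting rule's left-hand side can actually be assembled from the decomposition pieces, using distinctness of the pulled-out variables --- is handled only implicitly in the paper (which simply asserts $t'=C_i[l_j]$), so your extra care there is a refinement of, not a departure from, the published argument.
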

	\begin{proof}
          Let $R$ be a contracting TRS and $R_1 = \{ l \rightarrow r \in R ~| ~ l \rightarrow r \text{ is subterm} \}$. It has already been shown in~\cite{DBLP:journals/tocl/BaudetCD13} that any subterm convergent TRS is layered. Thus, if $R_1 = R$ we are done. Otherwise, let $R_2 = R$ and any rule $l \rightarrow r \in R_2 \backslash R_1$ must be contracting but not subterm. 
          Let $l \rightarrow r \in R_2 \backslash R_1$ and consider any arbitrary $(n,p,q)$-decomposition $l=D[l_1, \ldots, l_m, ~y_1, \ldots, y_p, ~z_1, \ldots, z_q]$. If $r \in \{l_1, \ldots, l_m \} \cup
          \{y_1, \ldots, y_p \} \cup \{ z_1, \ldots, z_q\}$, then we are done. Thus, assume that $r \neq l_i$, for any $1 \leq i \leq m$ and also $r \not\in \{y_1, \ldots, y_p \} \cup \{ z_1, \ldots, z_q\}$. Let $r = C_0[s_1, \ldots, s_k]$, we show how to construct $C_0$ and how to obtain each $s_i$, $1 \leq i \leq k$, from the decomposition. There are two cases based on $depth(r)$: 
		\begin{itemize}
			\item Assume $depth(r) = 1$.
			Then, each $s_i$ is a variable. If $s_i \in  \{y_1, \ldots, y_p \} \cup \{ z_1, \ldots, z_q\}$ for all $1 \leq i \leq k$, then by definition we are done. Thus, assume there exists at least one $s_i \not\in \{y_1, \ldots, y_p \} \cup \{ z_1, \ldots, z_q\}$, which implies that $s_i \in Var(l_j)$ for some $l_j$, $1 \leq j \leq m$, in $l$, but $s_i$ is not a subterm of $l_i$ in $r$. thus,  by Definition~\ref{def:projection-closed-deriv} there is a rule $t' \rightarrow s_i 
			\in R_1$ such that $t' = C_i[l_j]$.

			\item Assume $depth(r) = 2$. 
			Here again we can assume that each $s_i$ are variables. This is due to the fact that non-variable portions of the term can be placed in the context $C_0$. If $s_i$ is contained 
			in $ \{y_1, \ldots, y_p \} \cup \{ z_1, \ldots, z_q\}$, then by definition we are done. Otherwise, $s_{i} \in Var(l_i)$, for some $1 \leq i \leq m$, in $l$, but $s_i$ is no longer a subterm of $l_i$ in $r$. Thus,  by Definition~\ref{def:projection-closed-permut-eq} there is a rule in $R_1$ of the form $C_i[l_i] \rightarrow s_{i}$.  \qedhere
			\end{itemize}	
		
	\end{proof}
	
	\begin{rem}
		Notice that the above proof outlines a natural and simple layered system for any contracting TRS $R$. Let $R_0 = \emptyset$, 
		$R_1 = \{ l \rightarrow r \in R ~| ~ l \rightarrow r \text{ is subterm} \}$, and  
		$R_2 = R \setminus R_1$. Then $R_0 \cup R_1 \cup R_2 = R$ and any rule needed 
		to deduce variables for decompositions of $R_2$ rules will be subterm and thus contained in $R_1$. Note that there are TRSs which are layered for $n=1$ but not contracting. 
	\end{rem}
	
	Theorem~\ref{thm:contracting=>layered} implies that for contracting convergent TRSs, the YAPA procedure will not fail, i.e., return a ``don't know'' answer. However, this result does not prove termination. However, we conjecture that the YAPA procedure does terminate for contracting convergent TRSs.

       \section{Contracting Convergent Systems in Unions of Theories}\label{sec:combination}

In this section, we investigate the knowledge problems in unions of theories including at least a contracting convergent TRS. We mainly focus on constructor-sharing unions of theories, where it is possible to reuse existing modularity results and combination methods~\cite{ErbaturMR17}. Let us first consider the class of strictly contracting convergent TRSs. As shown below, this class is closed by constructor-sharing union:

       \begin{thm}
        If $R_1$ and $R_2$ are two strictly contracting convergent TRSs such that the function symbols shared by $R_1$ and $R_2$ are constructors for both $R_1$ and $R_2$, then $R_1 \cup R_2$ is strictly contracting convergent. 
       \end{thm}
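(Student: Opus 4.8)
The plan is to verify separately the three requirements packed into ``strictly contracting convergent'' for $R_1 \cup R_2$: that it is terminating, that it is confluent, and that it satisfies the strictly contracting condition of Definition~\ref{def:cap-contracted}. Termination and the strictly contracting condition will follow from rule-local properties shared by $R_1$ and $R_2$ and preserved under union; the constructor-sharing hypothesis is needed only for confluence, which carries the real content. For termination I would first observe that any strictly contracting rule $l\rightarrow r$ satisfies $|l\gamma|>|r\gamma|$ for every substitution $\gamma$: if $r$ is a strict subterm of $l$ this is immediate; if $r$ is a constant then $|r\gamma|=1<|l|\leq|l\gamma|$ since $depth(l)\geq 1$; and otherwise the first bullet of Definition~\ref{def:cap-contracted} yields a non-empty derivation $l\rightarrow_{R_{gemb}}^{+} r$, hence $l\gamma\rightarrow_{R_{gemb}}^{+}r\gamma$, and every $R_{gemb}$-step strictly decreases term size. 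Thus every rewrite step of a strictly contracting TRS strictly decreases term size, so such a TRS terminates, and this is trivially closed under union; therefore $R_1\cup R_2$ is terminating, with no use of the constructor-sharing hypothesis.

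Next, $R_1\cup R_2$ is strictly contracting. Every one of its rules comes from $R_1$ or from $R_2$, and ``being a subterm rule'', ``satisfying $depth(l)>depth(r)$'', and ``being a projecting rule'' in the sense of Definition~\ref{def:projecting} are properties of a single rule, independent of the surrounding system. Moreover the conditions defining a contracting rule --- and the notions of projection-closed derivation and projection-closed permutative equality they rely on --- only require the \emph{existence} in the ambient TRS of suitable projecting rules, and since $R_1\subseteq R_1\cup R_2$ and $R_2\subseteq R_1\cup R_2$, every projecting rule available in $R_i$ remains available in $R_1\cup R_2$. Hence each non-subterm rule of $R_1\cup R_2$ still satisfies one of the two bullets of Definition~\ref{def:cap-contracted} with the same witnesses, and $depth(l)>depth(r)$ still holds for each rule, so $R_1\cup R_2$ is strictly contracting.

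It remains to prove confluence, and this is where the hypothesis is used. Since $R_1\cup R_2$ is terminating, by Newman's Lemma it is confluent iff locally confluent, and by the Critical Pair Lemma~\cite{Baader98} it is locally confluent iff all of its critical pairs are joinable. I would then show that there is no overlap between a rule of $R_1$ and a rule of $R_2$: any such overlap would force the root symbol of some left-hand side of one of the two systems to coincide with a function symbol occurring in a left-hand side of the other; that symbol would then belong to the shared signature, and by hypothesis a shared symbol is a constructor of both $R_1$ and $R_2$ --- contradicting the fact that it occurs at the root of a left-hand side. (One checks this for a root overlap and for an overlap at a proper non-variable position, in both directions.) Consequently the set of critical pairs of $R_1\cup R_2$ is exactly the union of those of $R_1$ and those of $R_2$; each of these is joinable in its own system, since $R_1$ and $R_2$ are convergent, hence locally confluent, hence have all their critical pairs joinable --- and a critical pair joinable in $R_i$ is a fortiori joinable in $R_1\cup R_2$. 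Therefore $R_1\cup R_2$ is locally confluent, hence convergent, and together with the previous paragraphs this makes it strictly contracting convergent.

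The step I expect to be the main obstacle is precisely the ``no cross-overlap'' claim, and the subtlety to watch is that the constructor hypothesis must be invoked for \emph{both} components: if a shared symbol were a defined symbol of one of the two systems, it could occur inside a left-hand side of the other and genuinely overlap it, producing a critical pair that would then have to be joined by hand rather than dismissed outright. Everything else is bookkeeping, resting on the size-decreasingness of $R_{gemb}$-steps and on standard facts about terminating and confluent rewrite systems.
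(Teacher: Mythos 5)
Your proof is correct and follows essentially the same decomposition as the paper's: termination and the strictly contracting property are rule-local and hence preserved under union, while confluence follows from termination plus local confluence of the union, which the paper obtains by citing the modularity of local confluence for constructor-sharing TRSs and which you instead establish directly by showing that the constructor hypothesis excludes every cross-system critical pair. One small point: your case analysis for size-decreasingness omits rules justified by the permutative-equality bullet of Definition~\ref{def:cap-contracted}; this is harmless only because a permutative equality preserves depth and is therefore ruled out by the requirement $depth(l)>depth(r)$ of strict contraction, a fact worth stating explicitly.
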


       \begin{proof}
         First, note that a strictly contracting TRS is terminating. By definition, if $R_1$ and $R_2$ are strictly contracting, then so is $R_1 \cup R_2$. Thus, $R_1 \cup R_2$ is terminating. By assumption $R_1$ and $R_2$ are confluent, and so also locally confluent. The local confluence is a modular property for constructing-sharing TRSs~\cite{Middeldorp-Thesis90}. Thus, $R_1 \cup R_2$ is locally confluent. Then, $R_1 \cup R_2$ is confluent  since $R_1 \cup R_2$ is terminating. Consequently, $R_1 \cup R_2$ is both confluent and terminating, equivalently, it is convergent.
       \end{proof} 
       
       The combination framework developed in~\cite{ErbaturMR17} for the knowledge problems in unions of constructor-sharing theories can be applied to contracting convergent TRSs that are not necessarily strictly contracting:

       \begin{thm}
        If $R_1$ and $R_2$ are two contracting convergent TRSs such that the function symbols shared by $R_1$ and $R_2$ are constructors for both $R_1$ and $R_2$, then both deduction and static equivalence are decidable in $R_1 \cup R_2$. 
       \end{thm}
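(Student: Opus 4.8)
The plan is to apply the combination framework of~\cite{ErbaturMR17}, which shows that deduction (resp.\ static equivalence) is decidable in a union of two equational theories whose shared function symbols are constructors for both, provided deduction and static equivalence are already decidable in each of the component theories. So the proof reduces to checking that the hypotheses of that framework are met. The ``shared symbols are constructors for both'' condition is exactly the assumption of the theorem. For decidability of the knowledge problems in the components, I would argue as follows: each $R_i$ ($i=1,2$) is a contracting convergent TRS, hence locally stable by Theorem~\ref{thm:locally-stable}, hence deduction and static equivalence are both decidable in $R_i$ by Corollary~\ref{cor:deduction-static-equiv}. Both halves of that corollary are needed, since in the combination method the procedure for deduction in the union must decide, as a subroutine, equalities among ``alien'' subterms of recipes, which amounts to static equivalence in a component theory; the same holds a fortiori for the static-equivalence procedure for the union. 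With all hypotheses in place, the combination theorem of~\cite{ErbaturMR17} gives the conclusion.

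The reason to route the argument through~\cite{ErbaturMR17} rather than through the more elementary observation that $R_1 \cup R_2$ is again contracting — which is immediate, since each of its rules is a rule of $R_1$ or of $R_2$ — is that contracting-ness alone does not make $R_1 \cup R_2$ convergent: even under constructor-sharing, termination is not modular for term rewrite systems in general, so one cannot simply invoke Corollary~\ref{cor:deduction-static-equiv} on $R_1 \cup R_2$. The framework of~\cite{ErbaturMR17} works at the level of the equational theories and so does not require the union of the rewrite systems to be convergent; this is precisely why it is the right tool here.

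The point that needs care, and that a referee would scrutinise, is the exact alignment between the hypotheses of~\cite{ErbaturMR17} and the data at hand: that its notion of constructor coincides with the one used here (a function symbol not occurring at the root of any left-hand side), and that it indeed only asks the component theories to be presented by convergent TRSs with decidable knowledge problems. If that framework additionally presupposes that $R_1 \cup R_2$ itself is convergent, then one must supply this separately; confluence is easy, because the root of every left-hand side of $R_i$ is a non-constructor of $R_i$ and hence not shared, so no left-hand side of $R_1$ overlaps a left-hand side of $R_2$, whence $R_1 \cup R_2$ has no critical pairs beyond those of $R_1$ and of $R_2$ individually, and local confluence is modular for constructor-sharing TRSs~\cite{Middeldorp-Thesis90}. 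The genuinely delicate ingredient in that case would be termination of $R_1 \cup R_2$: I would attack it by exploiting that every contracting rule is non-duplicating and non-size-increasing, with only the depth-$2$ permutative rules preserving size, so that an infinite derivation would end in an infinite tail of depth-$2$ permutative steps, which one then rules out either by a layer/projection argument across the shared constructors or by invoking a modularity result for termination of non-duplicating constructor-sharing systems.
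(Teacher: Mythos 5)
Your proposal is correct and follows essentially the same route as the paper: the paper's proof likewise invokes the constructor-sharing combination result of~\cite{ErbaturMR17} and discharges the component hypotheses via the decidability of both knowledge problems for contracting convergent TRSs established in Section~\ref{sec:localstability}. Your additional remarks on why one should not instead try to prove $R_1 \cup R_2$ convergent are sensible commentary but not needed for the argument.
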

       \begin{proof}
       	Directly from~\cite{ErbaturMR17} where it is shown that, for constructor sharing theories, the knowledge problems are decidable for $R_1 \cup R_2$ if they are decidable for $R_1$ and $R_2$. Section~\ref{sec:localstability} establishes the decidability of the knowledge problems for contracting convergent TRSs.
       \end{proof}

       We now study the case of unions of theories defined as a contracting convergent TRS $R$ plus an equational theory $E$ that cannot be oriented as a convergent TRS. Typical examples for $E$ are provided by permutative theories. Fortunately, deduction is decidable in the class of permutative theories. 

\begin{thm}\label{thm:deduc-permute-theories}
Deduction is decidable in any permutative theory.
\end{thm}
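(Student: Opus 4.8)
The plan is to exploit the defining feature of a permutative theory $E$: every axiom $l = r$ preserves, for each function or variable symbol, its number of occurrences, so every equational step $s \leftrightarrow_E t$ preserves the multiset of symbols occurring in the term, and in particular preserves $|s|$. Consequently the $E$-equivalence class of any term $t$ is \emph{finite}: all members have the same size as $t$, and there are only finitely many $\Sigma$-terms of a given size over the finitely many symbols appearing in $t$ and in the (finite) presentation $E$. Moreover this class is computable — starting from $t$, repeatedly apply $\leftrightarrow_E$ in all possible ways (finitely many positions, finitely many axioms, and for matching one only needs substitutions whose range lies inside the already-bounded class), collecting new terms until saturation, which must halt since the class is finite.

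With that in hand I would reduce the deduction problem $\phi \vdash_E t$, for $\phi = \nu\tilde n.\sigma$, to a search problem. First compute the finite set $[t]_E = \{ t' \mid t' =_E t \}$. Now $\phi \vdash_E t$ holds iff there is a recipe $\zeta$ with $\fn(\zeta)\cap\tilde n = \emptyset$ and $\zeta\sigma =_E t$, i.e.\ $\zeta\sigma \in [t]_E$. Since $E$ is permutative, $|\zeta\sigma| = |t|$, and because $\sigma$ is a ground substitution each variable in $Dom(\sigma)$ contributes at least $1$ to the size, so a witnessing recipe $\zeta$ has $|\zeta| \le |t|$ and uses only function symbols occurring in $t$, in $Ran(\sigma)$, or in $E$, together with names in $\fn(\phi)$ and variables in $Dom(\sigma)$ — all finite. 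Hence there are only finitely many candidate recipes $\zeta$ to try, and for each one the test ``$\zeta\sigma \in [t]_E$'' is decidable by the previous paragraph. Enumerate them all; answer yes iff one succeeds.

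The one subtlety to get right — and the place I expect the main technical care to be needed — is the size bound on the recipe. One has to argue that applying a substitution cannot \emph{shrink} a term, which is immediate since substitution replaces each variable occurrence by a term of size $\ge 1$, so $|\zeta\sigma| \ge |\zeta|$; combined with $|\zeta\sigma| = |t|$ (permutativity) this gives $|\zeta| \le |t|$. The symbol-set restriction on $\zeta$ needs the analogous observation that symbols in $\zeta\sigma$ are exactly those in the context part of $\zeta$ together with those introduced by $\sigma$, so a recipe using a function symbol foreign to $t$, to $Ran(\sigma)$, and to $E$ could never have $\zeta\sigma =_E t$; any such symbol can be deleted from the candidate pool. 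Everything else is routine finite enumeration, so the theorem follows. Note this argument gives decidability but not a good complexity bound, which is acceptable here since only decidability is claimed.
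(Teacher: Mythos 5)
Your proposal is correct and follows essentially the same route as the paper: the decisive observation in both is that permutativity preserves term size under $=_E$, so $|\zeta\sigma| \geq |\zeta|$ together with $|\zeta\sigma| = |t|$ bounds the recipe by $|\zeta| \leq |t|$, leaving only finitely many candidates to test against a decidable word problem. Your write-up just makes explicit the finiteness/computability of $E$-classes and the symbol restriction, which the paper leaves implicit.
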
  

\begin{proof}
  Consider the problem of checking whether a term $t$ is deduced from $\phi = \nu \tilde{n}.\sigma$ modulo any permutative theory  $E$. For any term $s$ such that $|s| > |t|$, we have $s\sigma \neq_E t$: otherwise, we would have $|s\sigma| \geq |s| > |t|$ and $s\sigma =_E t$ implies $|s\sigma| = |t|$. Thus, only finitely many terms $s$ (up to a renaming) have to be considered in order to check whether $t$ is deduced from $\phi$ modulo $E$.
\end{proof}

When an equational theory $E$ is defined by a finite set of permutative axioms built using function symbols that do not occur as the root positions, these symbols can be viewed as constructors for $E$. It has been shown in~\cite{ErbaturMR17} that a permutative theory can be successfully combined with a theory given by a convergent TRS, provided that the shared symbols are constructors for both theories. Thus, directly from~\cite{ErbaturMR17} we obtain:

\begin{thm}
\label{thm:combi-deduction}
  Let $R$ be any contracting convergent TRS and $E$ any permutative presentation such that any function symbol $f$ shared by $R$ and $E$ is a constructor for $R$ and, for any axiom $l=r$ in $E$, $f$ does not occur neither as a root symbol of $l$ nor as a root symbol of $r$. Then, deduction is decidable in $R \cup E$.
\end{thm}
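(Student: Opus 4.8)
The plan is to obtain this as a direct instance of the combination framework of~\cite{ErbaturMR17} for the knowledge problems in constructor-sharing unions of theories. That framework reduces deduction in a union $T_1 \cup T_2$ to deduction in each component, under the hypothesis that every function symbol shared by $T_1$ and $T_2$ is a constructor for both theories. I would instantiate it with $T_1 = R$ and $T_2 = E$, so there are exactly three things to verify: decidability of deduction in $R$, decidability of deduction in $E$, and the constructor-sharing hypothesis.

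First, deduction in $R$ is decidable: since $R$ is contracting convergent, Theorem~\ref{thm:locally-stable} yields that $R$ is locally stable, and then Theorem~\ref{thm:know-by-local-stability} (equivalently Corollary~\ref{cor:deduction-static-equiv}) gives decidability of deduction in $R$. Second, deduction in $E$ is decidable because $E$ is permutative, by Theorem~\ref{thm:deduc-permute-theories}. Third, for the shared symbols: by hypothesis each shared $f$ is already a constructor for $R$; and the additional hypothesis that no shared $f$ occurs as a root symbol of either side of any axiom of $E$ is precisely the condition that makes $f$ a constructor for $E$ in the sense used in~\cite{ErbaturMR17}. Hence the constructor-sharing hypothesis of the combination theorem is satisfied, and applying that theorem gives decidability of deduction in $R \cup E$.

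The only real point requiring care — and the reason the statement is phrased with the explicit root-position condition — is to match the syntactic hypothesis here with the notion of ``constructor for an equational theory'' assumed in~\cite{ErbaturMR17}, and to confirm that the combination result there indeed permits one component to be presented by a (permutative) equational theory rather than by a convergent TRS on both sides. Since~\cite{ErbaturMR17} explicitly handles the combination of a convergent TRS with a permutative theory under exactly the shared-constructor assumption, no new argument beyond this citation is needed; the proof is essentially a bookkeeping check that the hypotheses line up.
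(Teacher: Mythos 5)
Your proposal is correct and matches the paper's argument: the paper likewise derives this result directly from the constructor-sharing combination framework of~\cite{ErbaturMR17}, using the root-position condition to make the shared symbols constructors for $E$, decidability of deduction in $R$ via local stability (Theorem~\ref{thm:locally-stable}), and decidability of deduction in $E$ via Theorem~\ref{thm:deduc-permute-theories}. No gap to report.
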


\begin{exa}[Intruder Theory with a Permutative Axiom]\label{example:combi}
Let us consider a theory used in practice to model a group messaging protocol~\cite{DBLP:conf/ccs/Cohn-GordonCGMM18}. For this protocol, the theory modeling the intruder can be defined~\cite{Nguyen2019} as a combination $R \cup K$ where
\[K = \{
\mathit{keyexch}(x, \mathit{pk}(x'), y, \mathit{pk}(y')) = \mathit{keyexch}(x', \mathit{pk}(x), y', \mathit{pk}(y)) \} \]
and 
\[ R  = \left\{
\begin{array}{ll}
\begin{array}{lcl}
  \mathit{adec}(\mathit{aenc}(m,\mathit{pk}(sk)),sk) & \rightarrow & m \\
  \mathit{checksign}(\mathit{sign}(m,sk),m,\mathit{pk}(sk)) & \rightarrow & ok
\end{array}
&
\begin{array}{lcl}
\mathit{getmsg}(\mathit{sign}(m,sk)) & \rightarrow & m\\
  \mathit{sdec}(\mathit{senc}(m,k),k) & \rightarrow & m
\end{array}
\end{array}
\right\}.\]

$R$ is a subterm convergent TRS and $K$ is a variable-permuting theory sharing with $R$ the constructor symbol $\mathit{pk}$. Theorem~\ref{thm:combi-deduction} applies to $R \cup K$, leading to a combined decision procedure for deduction in $R \cup K$. 
\end{exa}

Deduction is decidable in permutative theories, as stated in Theorem~\ref{thm:deduc-permute-theories}. However, the static equivalence problem is undecidable in general for permutative theories or even leaf permutative theories, as show recently in~\cite{Erbatur2024}. However, we conjecture there are useful subclasses of permutative theories, including the theory $K$ introduced in Example~\ref{example:combi}, for which static equivalence is decidable and thus for which a static equivalence form of Theorem~\ref{thm:combi-deduction} could be developed. We plan to investigate this in a future work.

	\section{Conclusions and Future Work}\label{sec:conclusions} 
	
	In this paper, we have explored the idea of graph-embedded term rewrite systems and shown their applicability in protocol analysis for identifying protocols with the local stability property. This in turn allows for the identification of protocols with decidable deduction and static-equivalence problems. We have also compared and contrasted this new definition to several other concept, including the FVP, the cap problem, and Layered TRS. Finally, we have developed several combination results for these TRS. In the first version of this paper~\cite{DBLP:conf/fscd/SatterfieldEMR23} we had identified several problems and areas for further research. Some of those problems have now been answered in this paper, in particular:
	\begin{itemize}
		\item We had conjectured in~\cite{DBLP:conf/fscd/SatterfieldEMR23} that the cap problem should be decidable for contracting systems and have show this
		to be the case, see Section~\ref{sec:relations-cap}. However, one has to be careful how the signature is defined, and we currently require it to be completely public (except for some constants).  
		\item We asked in~\cite{DBLP:conf/fscd/SatterfieldEMR23} about the relation between the graph-embedded systems and the layered systems. This has now been developed in Section~\ref{sec:relations-layered}.
	\end{itemize}
	\noindent
	We have also made a number of improvements over the presentation in~\cite{DBLP:conf/fscd/SatterfieldEMR23}:
	\begin{itemize}
		\item We have improved and simplified the definition of \emph{contracting convergent TRS}.
		\item We have improved the proof of the undecidability of the knowledge problems for graph-embedded TRS. 
	\end{itemize}
	
	While we have answer some problems from~\cite{DBLP:conf/fscd/SatterfieldEMR23}, additional problems still remain to be investigated. For example, While the knowledge problems are undecidable for graph-embedded convergent systems in general and they are decidable for contracting, there is a gap between the two classes of systems. That is, how much can the contracting subclass be extended before the undecidable barrier is encountered? With respect to graph theory ideas, we are also interested in knowing if additional graph theory ideas could be useful in symbolic security protocol analysis. For example, not absolutely all theories considered in~\cite{DBLP:journals/tcs/AbadiC06, DBLP:journals/tocl/ChadhaCCK16, DBLP:journals/jar/CiobacaDK12, Dreier2017} are graph-embedded. It would be interesting to know if such systems could be considered via graph minor concepts. Just as in~\cite{DBLP:conf/fscd/SatterfieldEMR23}, we remain interested in seeing how ideas from graph theory, translated into the TRS setting, are useful for question not only in protocol analysis but also in other rewriting domains. We had also asked about termination and if the new definition could be useful in developing termination results for procedures such as~\cite{DBLP:journals/tcs/AbadiC06, DBLP:journals/tocl/ChadhaCCK16, DBLP:journals/jar/CiobacaDK12, Dreier2017}. It remains to investigate termination in these procedures. 

	Finally, in addition to answering some of the open problems of ~\cite{DBLP:conf/fscd/SatterfieldEMR23}, we have developed a number of new results not asked about in~\cite{DBLP:conf/fscd/SatterfieldEMR23}. This includes identifying a subclass of graph-embedded systems that guarantee the FVP, see Section~\ref{sec:relations}, and developing new combination results, see Section~\ref{sec:combination}. In particular, we have investigated the union of contracting convergent systems with some permutative theories. As a future work, we plan to study how undecidability proofs known for unification in permutative theories could be adapted to static equivalence.
	
	\bibliographystyle{alphaurl} 
	\bibliography{subterms-refs}

\end{document}